\documentclass[11pt,english]{article}
\usepackage[T1]{fontenc}
\usepackage[latin9]{inputenc}
\usepackage{geometry}
\geometry{verbose,tmargin=0.7in,bmargin=0.7in,lmargin=0.7in,rmargin=0.7in}
\usepackage{color}
\usepackage{array}
\usepackage{multirow}
\usepackage{amsmath}
\usepackage{amssymb}
\usepackage[authoryear]{natbib}

\makeatletter

\providecommand{\tabularnewline}{\\}

\@ifundefined{date}{}{\date{}}

\usepackage{graphics}

\newtheorem{assumption}{Assumption}
\newtheorem{theorem}{Theorem}

\newtheorem{lemma}{Lemma}
\newtheorem{proof}{Proof}

\newtheorem{remark}{Remark}

\newcommand*{\indep}{%
\bot
}
\newcommand*{\nindep}{%
  \mathbin{
    \mathpalette{\@indep}{\not}
  }%
}
\newcommand*{\@indep}[2]{%
  \sbox0{$#1\perp\m@th$}
  \sbox2{$#1=$}
  \sbox4{$#1\vcenter{}$}
  \rlap{\copy0}
  \dimen@=\dimexpr\ht2-\ht4-.2pt\relax
  \kern\dimen@
  {#2}%
  \kern\dimen@
  \copy0 
} 

\newcommand{\pr}{P} 
\newcommand{\var}{\mathrm{var}}

\newcommand{\R}{\mathbb{R}}
\newcommand{\de}{\mathrm{d}}
\newcommand{\T}{\mathrm{\scriptscriptstyle T}}

\newcommand{\opt}{\mathrm{opt}}

\newcommand{\eff}{\mathrm{eff}}
\newcommand{\dr}{\mathrm{dr}}

\newcommand{\N}{\mathcal{N}}

\newcommand{\naive}{\mathrm{naive}}
\newcommand{\ipcw}{\mathrm{ipcw}}

\newcommand{\msm}{\mathrm{msm}}

\newcommand{\disc}{\mathrm{disc}}

\makeatother

\usepackage{babel}
\begin{document}
\title{\textbf{\huge{}Semiparametric estimation of structural failure time
model in continuous-time processes}}
\author{Shu Yang\thanks{Department of Statistics, North Carolina State University, North Carolina
27695, U.S.A. Email: syang24@ncsu.edu}, Karen Pieper\thanks{Duke Clinical Research Institute, North Carolina, U.S.A. Email: karen.pieper@duke.edu},
and Frank Cools\thanks{AZ Klina, Belgium. Email: frank.cools@klina.be}}
\maketitle
\begin{abstract}
Structural failure time models are causal models for estimating the
effect of time-varying treatments on a survival outcome. G-estimation
and artificial censoring have been proposed to estimate the model
parameters in the presence of time-dependent confounding and administrative
censoring. However, most of existing methods require manually preprocessing
data into regularly spaced data, which may invalidate the subsequent
causal analysis. Moreover, the computation and inference are challenging
due to the non-smoothness of artificial censoring. We propose a class
of continuous-time structural failure time models, which respects
the continuous time nature of the underlying data processes. Under
a martingale condition of no unmeasured confounding, we show that
the model parameters are identifiable from potentially infinite estimating
equations. Using the semiparametric efficiency theory, we derive the
first semiparametric doubly robust estimators, in the sense that the
estimators are consistent if either the treatment process model or
the failure time model is correctly specified, but not necessarily
both. Moreover, we propose using inverse probability of censoring
weighting to deal with dependent censoring. In contrast to artificial
censoring, our weighting strategy does not introduce non-smoothness
in estimation and ensures that the resampling methods can be used
to make inference.

\bigskip{}
 \textit{Keywords}: Causality; Cox proportional hazards model; Discretization;
Observational study; Semiparametric analysis; Survival data. 
\end{abstract}

\section{Introduction}

Confounding by indication is common in observational studies, which
obscures the causal relationship of the treatment and outcome \citep{robins1992g}.
In longitudinal observational studies, this phenomenon becomes more
pronounced due to time-varying confounding when there are time-dependent
covariates that predict the subsequent treatment and outcome and also
are affected by the past treatment history. In this case, standard
regression methods whether or not adjusting for confounders are fallible
\citep{robins2000marginalstructural,daniel2013methods}.

Structural failure time models \citep{robins1991correcting,robins1992estimation}
and marginal structural models \citep{robins2000marginal,hernan2001marginal}
have been used to effectively handle time-varying confounding. Structural
failure time models simulate the potential failure time outcome that
would have been observed in the absence of treatment by removing the
effect of treatment, while marginal structural models specify the
marginal relationship of potential outcomes under different treatments
possibly adjusting for the baseline covariates. Structural failure
time models have certain features that are more desirable than marginal
structural models \textcolor{black}{\citep{robins2000marginal}}:
structural failure time models allow for modeling time-varying treatment
modification effects using the post baseline time-dependent covariates;
they are more flexible to translate biological hypotheses into their
parameters \citep{robins1998structural,lok2008statistical}; and the
g-estimation \citep{robins1998structural} for structural failure
time models does not require the probability of receiving treatment
at each time point to be positive for all subjects. 

Most of structural failure time models specify deterministic relationships
of the observed failure time and the baseline failure time and therefore
are rank preserving; see, e.g., \citet{mark1993method,mark1993estimating,robins1994adjusting,robins2002analytic,hernan2005structural}.
Moreover,  existing g-estimation often uses a discrete-time setup,
which requires all subjects to be followed at the same pre-fixed time
points. However, in practical situations, the variables and processes
are more likely to be measured at irregularly spaced time points,
which may not be the same for all subjects \citep{robins1998correction}.
To apply existing estimators, one needs to discretize the timeline
and re-create the measurements at each time point e.g. by averaging
observations within the given time point or imputation if there are
no observations. Such data preprocessing may distort the relationship
of variables and cast doubt on the sequential randomization assumption,
which however is essential to justify the discrete-time g-estimation
\citep{zhang2011causal}. In the literature, much less work has been
addressing non-rank preserving continuous-time causal models; exceptions
include \citet{robins1998structural,lok2004estimating,lok2008statistical,lok2017mimicking}.
\citet{robins1998structural} conjectured that g-estimation extends
to the settings with continuous-time processes, which however relies
on the rank preserving assumption. Recently, \citet{lok2017mimicking}
presented a formal proof for this conjecture without rank preservation.

Despite these advances, estimation for continuous-time structural
failure time models is largely under-developed. Existing g-estimation
is singly robust, in the sense that it relies on a correct model specification
for the treatment process. In the literature of missing data analysis
and causal inference, many authors have proposed doubly robust estimators
that require either one of the two model components to be correctly
specified \citep{robins1994estimation,scharfstein1999adjusting,van2002locally,lunceford2004stratification,bang2005doubly,cao2009improving,Robins2007,lok2012impact}.
\citet{yang2015gof} constructed a doubly robust test procedure for
structural nested mean models. To our best knowledge, there does not
exist a double robust estimator for structural failure time models.

We develop a general framework for structural failure time models
with continuous-time processes. We relax the local rank preservation
by specifying a distributional instead of deterministic relationship
of the treatment process and the potential baseline failure time.
We impose a martingale condition of no unmeasured confounding, which
serves as the basis for identification and estimation. Under the semiparametric
model characterized by the structural failure time model and the no
unmeasured confounding assumption, we develop a class of regular asymptotically
linear estimators. This class of estimators contains the semiparametric
efficient estimators \citep{bickel1993efficient,tsiatis2007semiparametric}.
To ease computation, we further construct an optimal member among
a wide class of semiparametric estimators that are relatively simple
to compute. Moreover, we show that our estimators are doubly robust,
which achieve the consistency if either the model for the treatment
process is correctly specified or the failure time model is correctly
specified, but not necessarily both. Our framework is readily applicable
to the traditional discrete-time settings.

In the presence of censoring, Robins and coauthors have introduced
the notion of the potential censoring time and proposed an approach
for using this information to estimate the treatment effect. This
approach may artificially terminate follow-up for some subjects before
their observed failure or censoring times, and therefore it is often
called artificial censoring. This approach works only for administrative
censoring when follow-up ends at a pre-specified date. It fails to
provide consistent estimators for dependent censoring \citep{rotnitzky1995semiparametric},
which likely occurs due to subjects drop out. Moreover, the computation
and inference are challenging due to the non-smoothness of artificial
censoring \citep{joffe2001administrative,joffe2012g}. To overcome
these limitations, we propose using inverse probability of censoring
weighting to deal with censoring. In contrast to artificial censoring,
our weighting strategy is smooth and ensures that the resampling methods
can be used for inference, which is straightforward to implement in
practice.

\section{Notation, models, and assumptions\label{sec:Notation}}

\subsection{Notation}

We assume that $n$ subjects constitute a random sample from a larger
population of interest and therefore are independent and identically
distributed. For notational simplicity, we suppress the subscript
$i$ for subject. Let $T$ be the observed failure time. Let $L_{t}$
be a multidimensional covariates process, and let $A_{t}$ be the
binary treatment process; i.e., $A_{t}=1$ if the subject is on treatment
at time $t$, and $A_{t}=0$ if the subject is off treatment at time
$t$. We assume that all subjects received treatment at baseline and
may discontinue treatment during follow up. We also assume that treatment
discontinuation is permanent; i.e., if $A_{t}=0$, then $A_{u}=0$
for all $u\geq t$. Let $V$ be the time to treatment discontinuation
or failure, whichever came first, and let $\Gamma$ be the binary
indicator of treatment discontinuation at time $V$. For the purpose
of regularity, we assume that all continuous-time processes are cadlag
processes; i.e., the processes are continuous from the right and have
limits from the left. Let $H_{t}=(L_{t},A_{t-})$ be the combined
covariates and treatment process, where we write $A_{t-}$ for the
treatment just before time $t$. We also use overline to denote the
history; e.g., $\overline{H}_{t}$$=(H_{u}:0\leq u\leq t)$ is the
history of the covariates and treatment process until time $t$. Following
\citet{cox1984oakes}, we assume there exists a potential baseline
failure time $U$, representing the failure time outcome had the treatment
always been withheld. The full data is $F=(T,\overline{H}_{T})$.
We assume that there is no censoring before $T$ until $\mathsection$
\ref{sec:Censoring}.

\subsection{Structural failure time model}

The structural failure time model specifies the relationship of the
potential baseline failure time $U$ and the actual observed failure
time $T$. We assume that given any $\overline{H}_{t}$,
\begin{equation}
U\sim U(\psi^{*})=\int_{0}^{T}\exp[\{\psi_{1}^{*}+\psi_{2}^{*\T}g(L_{u})\}A_{u}]\de u,\label{eq:trt model}
\end{equation}
where $\sim$ means ``has the same distribution as'', and $\psi^{*\T}=(\psi_{1}^{*},\psi_{2}^{*\T})$
is a $p$-vector of unknown parameters. Model (\ref{eq:trt model})
entails that the treatment effect is to accelerate or decelerate the
failure time compared to the baseline failure time $U$. Intuitively,
$\exp[\{\psi_{1}^{*}+\psi_{2}^{*\T}g(L_{t})\}A_{t}]$ can be interpreted
as the effect rate of the treatment on the outcome possibly modified
by the time-varying covariate $g(L_{t})$. To help understanding the
model, consider a simplified model $U(\psi^{*})=\int_{0}^{T}\exp(\psi_{1}^{*}A_{u})\de u$.
The multiplicative factor $\exp(\psi_{1}^{*})$ describes the relative
increase/decrease in the failure time had the subject continuously
received treatment compared to had the treatment always been withheld. 

\begin{remark}

The rank-preserving structural failure time model specifies a deterministic
relationship instead of a distributional relationship of the failure
times; i.e., it uses ``$=$'' instead of ``$\sim$'' in Model
(\ref{eq:trt model}). Then, for subjects $i$ and $j$ who have the
same observed treatment and covariate history, $T_{i}<T_{j}$ must
imply $U_{i}<U_{j}$. This may be restrictive in practice. In contrast,
we link the distribution of the baseline failure time and the distribution\textit{
}of the actual failure time after removing the treatment effect. Specifically,
we assume that the distributions of $U$ and $U(\psi^{*})$ are the
same, given past treatment and covariates, which do not impose the
rank-preserving restriction.

\end{remark}

\subsection{No unmeasured confounding}

The model parameter $\psi^{*}$ is not identifiable in general, because
$U$ is missing for all subjects. To identify and estimate $\psi^{*}$,
we impose the assumption of no unmeasured confounding \citep{yang2018modeling}.

\begin{assumption}[No Unmeasured Confounding]\label{asump:UNC}The
hazard of treatment discontinuation is 
\begin{eqnarray}
\lambda_{V}(t\mid F,U) & = & \lim_{h\rightarrow0}h^{-1}P(t\leq V<t+h,\Gamma=1\mid F,U,V\geq t)\nonumber \\
 & = & \lim_{h\rightarrow0}h^{-1}P(t\leq V<t+h,\Gamma=1\mid\overline{H}_{t},V\geq t)=\lambda_{V}\left(t\mid\overline{H}_{t}\right).\label{eq:UNC}
\end{eqnarray}

\end{assumption}

Assumption \ref{asump:UNC} implies that $\lambda_{V}(t\mid F,U)$
depends only on the past treatment and covariate history until time
$t$, $\overline{H}_{t}$, but not on the future variables and $U$.
This assumption holds if the set of historical covariates contains
all prognostic factors for the failure time that affect the decision
of discontinuing treatment at $t$.

For an equivalent representation of the treatment process $A_{t}$,
we define the counting process $N_{V}(t)=I(V\leq t,\Gamma=1)$ and
the at-risk process $Y_{V}(t)=I(V\geq t)$ \citep{andersen2012statistical}.
Let $\sigma(H_{t})$ be the $\sigma$-field generated by $H_{t}$,
and let $\sigma(\overline{H}_{t})$ be the $\sigma$-field generated
by $\cup_{u\leq t}\sigma(H_{u})$. We show in the supplementary material
that under Model (\ref{eq:trt model}), (\ref{eq:UNC}) implies that
\begin{equation}
\lambda_{V}\{t\mid\overline{H}_{t},U(\psi^{*})\}=\lambda_{V}(t\mid\overline{H}_{t}).\label{eq:UNC2}
\end{equation}
Thus, under common regularity conditions for the counting process,
$M_{V}(t)=N_{V}(t)-\int_{0}^{t}\lambda_{V}(u\mid\overline{H}_{u})Y_{V}(u)\de u$
is a martingale with respect to $\sigma\{U(\psi^{*}),\overline{H}_{t}\}$,
which renders $\psi^{*}$ identifiable as we show in $\mathsection$
\ref{sec:Identification}. We now focus on semiparametric estimation
in the next section.

\section{Semiparametric estimation\label{sec:Semiparametric-estimation}}

We consider the semiparametric model characterized by Model (\ref{eq:trt model})
and Assumption \ref{asump:UNC}. We derive a regular asymptotically
linear estimator $\widehat{\psi}$ of $\psi^{*}$; i.e.
\begin{equation}
n^{1/2}(\widehat{\psi}-\psi^{*})=P_{n}\Phi(F)+o_{P}(1),\label{eq:RAL}
\end{equation}
where $P_{n}$ is the empirical measure induced by $F_{1},\ldots,F_{n}$,
i.e., $P_{n}\Phi(F)=n^{-1}\sum_{i=1}^{n}\Phi(F_{i})$, and $\Phi(F)$
is the influence function of $\widehat{\psi}$, which has mean zero
and finite and non-singular variance. 

Let $f_{F}(T,\overline{H}_{T};\psi,\theta)$ be the semiparametric
likelihood function based on a single variable $F$, where $\psi$
is the primary parameter of interest, and $\theta$ is the infinite-dimension
nuisance parameter under the semiparametric model. A fundamental result
in \citet{bickel1993efficient} states that the influence functions
for regular asymptotically linear estimators lie in the orthogonal
complement of the nuisance tangent space, denoted by $\Lambda^{\bot}$.
We now characterize $\Lambda^{\bot}$ and defer the proof to the supplementary
material.

\begin{theorem}\label{Thm:Ortho}

Under Model (\ref{eq:trt model}) and Assumption \ref{asump:UNC},
the orthogonal complement of nuisance tangent space for $\psi^{*}$
is 
\begin{equation}
\Lambda^{\bot}=\left\{ \int_{0}^{\infty}\left(h_{u}\{U(\psi^{*}),\overline{H}_{u}\}-E\left[h_{u}\{U(\psi^{*}),\overline{H}_{u}\}\mid\overline{H}_{u},V\geq u\right]\right)\de M_{V}(u)\right\} ,\label{eq:orthogonal nuisance}
\end{equation}
for all $p$-dimensional $\text{\ensuremath{h_{u}}}\{U(\psi^{*}),\overline{H}_{u}\}$.

\end{theorem}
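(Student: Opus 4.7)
The plan is to apply the semiparametric tangent-space calculus of \citet{bickel1993efficient} to the full-data model characterized by (\ref{eq:trt model}) and Assumption \ref{asump:UNC}, with the counting-process structure of the treatment-discontinuation process $N_V$ supplying the main ingredient.

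First I would factorize the likelihood of $F=(T,\overline{H}_T)$. At fixed $\psi^*$ the map $(T,\overline{H}_T)\leftrightarrow\{U(\psi^*),\overline{H}_T\}$ is one-to-one with a deterministic Jacobian computable from (\ref{eq:trt model}), so the density splits into three functionally-independent blocks: (i) the joint law of $\{U(\psi^*),L_0\}$; (ii) the transition law of $L_t$ given the past and $U(\psi^*)$; and (iii) the intensity $\lambda_V(t\mid\overline{H}_t)$, which by Assumption \ref{asump:UNC} is restricted to depend on $\overline{H}_t$ only. The corresponding nuisance tangent space decomposes as $\Lambda=\Lambda_1\oplus\Lambda_2\oplus\Lambda_3$. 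The non-trivial piece $\Lambda_3$ is, by the standard semiparametric theory for unrestricted counting-process intensities, the $L^2$-closure of $\bigl\{\int_0^\infty g(u,\overline{H}_u)\de M_V(u): g\text{ predictable w.r.t.\ }\sigma(\overline{H}_u)\bigr\}$.

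Second I would verify that every element of the set displayed in (\ref{eq:orthogonal nuisance}) belongs to $\Lambda^\perp$. Write $\widetilde h_u$ for the centered integrand. By (\ref{eq:UNC2}), $M_V$ remains a martingale in the enlarged filtration $\sigma\{U(\psi^*),\overline{H}_t\}$, so $\int\widetilde h_u\de M_V(u)$ is a zero-mean martingale transform in that filtration. Orthogonality to $\Lambda_1$ and $\Lambda_2$ follows from iterated conditioning on $\sigma\{U(\psi^*),\overline{H}_{u-}\}$ together with the absence of common jumps between the covariate-process martingales and $M_V$. Orthogonality to $\Lambda_3$ reduces to the Doob--Meyer identity
\begin{equation*}
E\left\{\int\widetilde h_u\de M_V(u)\int g(u,\overline{H}_u)\de M_V(u)\right\}=E\int\widetilde h_u\,g(u,\overline{H}_u)\,Y_V(u)\lambda_V(u\mid\overline{H}_u)\de u,
\end{equation*}
which vanishes by construction once one conditions on $\{\overline{H}_u,V\ge u\}$ inside the integral.

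Third, and this is the step I expect to be the main obstacle, I would prove the reverse inclusion: every $\Phi\in\Lambda^\perp$ has the form (\ref{eq:orthogonal nuisance}). The route is to invoke a martingale representation theorem for the compensated counting process in the enlarged filtration: since the observed $\sigma$-algebra is generated by $L_0$, the covariate-process increments, and the jumps of $N_V$, any mean-zero element of $L^2$ decomposes into stochastic integrals against $\de M_V$ and against the orthogonal covariate-process martingale. The $\Lambda^\perp$ constraint then forces the covariate-process term to lie simultaneously inside and orthogonal to $\Lambda_1\oplus\Lambda_2$, hence to vanish, while orthogonality to $\Lambda_3$ forces the $\sigma(\overline{H}_u)$-conditional expectation of the remaining integrand to be zero on $\{V\ge u\}$, which is exactly the centering in (\ref{eq:orthogonal nuisance}). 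The delicate point is verifying the representation in the enlarged filtration: adjoining the full-data variable $U(\psi^*)$ at time zero is an initial enlargement that could in principle destroy the semimartingale structure, and it is precisely the no-unmeasured-confounding condition (\ref{eq:UNC2}) that preserves $M_V$ as a martingale and rules out extra orthogonal martingales beyond the ones accounted for.
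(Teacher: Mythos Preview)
Your overall strategy---factorize the likelihood via the bijection $(T,\overline{H}_T)\leftrightarrow\{U(\psi^*),\overline{H}_T\}$, decompose $\Lambda=\Lambda_1\oplus\Lambda_2\oplus\Lambda_3$, and identify $\Lambda_3$ as $\{\int g(u,\overline{H}_u)\de M_V(u)\}$---matches the paper exactly. The difference is in how you obtain the reverse inclusion. You plan to invoke a martingale representation theorem in the filtration enlarged by $U(\psi^*)$ to decompose an arbitrary $\Phi\in\Lambda^\perp$ into integrals against $\de M_V$ and ``covariate-process martingales,'' and you correctly flag the initial-enlargement issue as the delicate point. The paper bypasses this entirely with a purely semiparametric device: it introduces the enlarged space $\Lambda_3^*=\{\int h_u\{U(\psi^*),\overline{H}_u\}\de M_V(u)\}$ and observes that $\Lambda_1\oplus\Lambda_2\oplus\Lambda_3^*$ is the nuisance tangent space of the \emph{unrestricted} model (the one obtained by dropping Assumption~\ref{asump:UNC}), which is nonparametric and therefore has tangent space equal to all of $\mathcal{H}$. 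From $\mathcal{H}=\Lambda_1\oplus\Lambda_2\oplus\Lambda_3^*$ and $\Lambda=\Lambda_1\oplus\Lambda_2\oplus\Lambda_3$ it is immediate that $\Lambda^\perp\subset\Lambda_3^*$, and the remaining task is just to project an arbitrary element of $\Lambda_3^*$ off $\Lambda_3$, which is your Doob--Meyer computation. This argument never needs a representation theorem for the covariate process, never needs to name the ``covariate-process martingales,'' and the enlargement issue you worry about is absorbed into the standard fact that a nonparametric tangent space fills the Hilbert space. Your route is workable but would require additional structural assumptions on $L_t$ to make the representation rigorous; the paper's trick is cleaner and model-free.
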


Denote the score function of $\psi^{*}$ as $S_{\psi}(F)=\partial\log f_{F}\left(T,\overline{H}_{T};\psi,\theta\right)/\partial\psi$
evaluated at $(\psi^{*},\theta^{*})$. Following \citet{bickel1993efficient},
the efficient score for $\psi^{*}$ is $S_{\eff}(F)=\prod\left\{ S_{\psi}(F)\mid\Lambda^{\bot}\right\} $,
where $\prod$ is the projection operator in the Hilbert space. The
efficient influence function is $\Phi(F)=E\left\{ S_{\eff}(F)S_{\eff}(F)^{\T}\right\} ^{-1}$$\times S_{\eff}(F)$,
with the variance $\left[E\left\{ S_{\eff}(F)S_{\eff}(F)^{\T}\right\} \right]^{-1}$,
which achieves the semiparametric efficiency bound. However, the analytical
form of $S_{\psi}(F)$ is intractable in general. To facilitate estimation,
we focus on a reduced class of $\Lambda^{\bot}$ with $h_{u}\{U(\psi^{*}),\overline{H}_{u}\}=c(\overline{H}_{u})U(\psi^{*})$
for $c(\overline{H}_{u})\in\mathbb{\R}^{p}$, leading to the estimating
function for $\psi^{*}$: 
\begin{equation}
G(\psi;F)=\int_{0}^{\infty}c(\overline{H}_{u})\left[U(\psi)-E\left\{ U(\psi)\mid\overline{H}_{u},V\geq u\right\} \right]\de M_{V}(u).\label{eq:G}
\end{equation}
Because of the no unmeasured confounding assumption, $U(\psi^{*})\indep M_{V}(u)\mid(\overline{H}_{u},V\geq u)$,
and therefore $E\{G(\psi^{*};F)\}=0$. We obtain the estimator of
$\psi^{*}$ by solving 
\begin{equation}
P_{n}\left\{ G(\psi;F)\right\} =0.\label{eq:ee4}
\end{equation}
Within this class, we show that the optimal choice of $c(\overline{H}_{u})$
is
\begin{equation}
c^{\opt}(\overline{H}_{u})=E\left\{ \partial\dot{U}_{u}(\psi)/\partial\psi\mid\overline{H}_{u},V=u\right\} \left[\mathrm{var}\left\{ U(\psi)\mid\overline{H}_{u},V\geq u\right\} \right]^{-1},\label{eq:c-opt}
\end{equation}
in the sense that with this choice the solution to (\ref{eq:ee4})
gives the most precise estimator of $\psi^{*}$ among all solutions
to (\ref{eq:ee4}). To use $c^{\opt}(\overline{H}_{u})$, we require
positing working models for approximation; see the example in the
simulation study. Compared to naive choices, e.g., $c(\overline{H}_{u})=\{A_{u},A_{u}g(L_{u})^{\T}\}^{\T}$
for Model (\ref{eq:trt model}), our simulation results show that
using the optimal choice gains estimation efficiency.

In (\ref{eq:ee4}), we assume that the model for the treatment process
and $E\left\{ U(\psi)\mid\overline{H}_{u},V\geq u\right\} $ are known.
In practice, they are often unknown and must be modeled and estimated
from the data. We posit a proportional hazards model with time-dependent
covariates; i.e.,
\begin{eqnarray}
\lambda_{V}\left(t\mid\overline{H}_{t};\gamma_{V}\right) & = & \lambda_{V,0}(t)\exp\left\{ \gamma_{V}^{\T}g_{V}(t,\overline{H}_{t})\right\} ,\label{eq:ph-V}
\end{eqnarray}
where $\lambda_{V,0}(t)$ is unknown and non-negative, $g_{V}(t,\overline{H}_{t})$
is a pre-specified function of $t$ and $\overline{H}_{t}$, and $\gamma_{V}$
is a vector of unknown parameters. We also posit a working model $E\left\{ U(\psi)\mid\overline{H}_{u},V\geq u;\xi\right\} $,
indexed by $\xi$. We show that the estimating equation for $\psi^{*}$
achieves the double robustness or double protection \citep{rotnitzky2015double}.

\begin{theorem}[Double robustness]\label{Thm:2-dr}Under Model (\ref{eq:trt model})
and Assumption \ref{asump:UNC}, the estimating equation (\ref{eq:ee4})
for $\psi^{*}$ is unbiased of zero if either the model for the treatment
process is correctly specified, or the failure time model $E\left\{ U(\psi)\mid\overline{H}_{u},V\geq u;\xi\right\} $
is correctly specified, but not necessarily both.

\end{theorem}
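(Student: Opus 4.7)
The plan is to verify unbiasedness of the estimating equation by a direct decomposition, with a key martingale compensator identity doing the main work. Let $\tilde{e}(u;\xi)$ denote the fitted working model for $E\{U(\psi^*)\mid\overline{H}_u,V\geq u\}$, let $\tilde{\lambda}_V(u\mid\overline{H}_u;\gamma_V)$ be the hazard from the proportional hazards model~(\ref{eq:ph-V}), and set $\tilde{M}_V(u)=N_V(u)-\int_0^u\tilde{\lambda}_V(s\mid\overline{H}_s;\gamma_V)Y_V(s)\de s$. The operational estimating function reads
$$G(\psi^*;F)=\int_0^\infty c(\overline{H}_u)\{U(\psi^*)-\tilde{e}(u;\xi)\}\de\tilde{M}_V(u),$$
which I would split into the $\de N_V(u)$ piece and the $\tilde{\lambda}_V Y_V\de u$ piece before taking expectations.

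First I would invoke Assumption~\ref{asump:UNC}, specifically its consequence~(\ref{eq:UNC2}), to conclude that the process $M_V$ based on the true hazard $\lambda_V(u\mid\overline{H}_u)$ is a martingale with respect to the enlarged filtration $\sigma\{U(\psi^*),\overline{H}_u\}$. This delivers the compensator identity
$$E\left\{\int_0^\infty f(u)\de N_V(u)\right\}=E\left\{\int_0^\infty f(u)\lambda_V(u\mid\overline{H}_u)Y_V(u)\de u\right\}$$
for any integrable process $f(u)$ predictable with respect to the enlarged filtration, in particular $f(u)=c(\overline{H}_u)\{U(\psi^*)-\tilde{e}(u;\xi)\}$. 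In Case 1, where $\tilde{\lambda}_V=\lambda_V$, the integrand is predictable on the enlarged filtration and $\tilde{M}_V=M_V$ is a martingale there, so $E\{G(\psi^*;F)\}=0$ no matter how $\tilde{e}$ is chosen. In Case 2, where $\tilde{e}$ is correct, I would apply the compensator identity to the $\de N_V$ term, combine it with the remaining $\tilde{\lambda}_V Y_V\de u$ term, and then use Fubini followed by the tower property conditional on $(\overline{H}_u,V\geq u)$. Because $Y_V(u)=I(V\geq u)$, $c(\overline{H}_u)$, $\tilde{e}(u;\xi)$, and both $\lambda_V$ and $\tilde{\lambda}_V$ are measurable with respect to that $\sigma$-field, the inner conditional expectation replaces $U(\psi^*)$ by $E\{U(\psi^*)\mid\overline{H}_u,V\geq u\}$, and the resulting integrand vanishes by correctness of $\tilde{e}$, irrespective of $\tilde{\lambda}_V$.

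The hard part will be rigorously upgrading the pointwise hazard equality~(\ref{eq:UNC2}) to the global statement that $M_V$ is a martingale on the filtration enlarged by the non-adapted random variable $U(\psi^*)$. This is delicate because $U(\psi^*)$ is a functional of the entire observed trajectory up to time $T$ and therefore anticipates the future of $\overline{H}$; an innovation-theorem or enlargement-of-filtration argument is needed to pass from the conditional intensity identity to the corresponding martingale property. Once that step is in place, the remainder reduces to routine manipulations with Fubini's theorem and iterated conditional expectations, and Theorem~\ref{Thm:2-dr} follows cleanly from the two cases.
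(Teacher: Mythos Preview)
Your proposal is correct and follows essentially the same route as the paper: Case~1 uses that $M_V$ is a martingale on the enlarged filtration $\sigma\{U(\psi^*),\overline{H}_t\}$ (the paper states this as Lemma~S1, justified via~(\ref{eq:UNC2})), and Case~2 is handled by the same add-and-subtract of the true compensator---your ``compensator identity'' applied to $\de N_V$ is exactly the paper's step of writing $\de N_V-\tilde{\lambda}_V Y_V\de u=\de M_V+(\lambda_V-\tilde{\lambda}_V)Y_V\de u$ and killing the first piece---followed by the tower property on $(\overline{H}_u,V\geq u)$. Your identification of the enlargement-of-filtration step as the subtle point is also on target; the paper records this as a lemma and invokes standard counting-process regularity.
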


\section{Censoring\label{sec:Censoring}}

\subsection{Inverse probability of censoring weighting\label{subsec:algorithm}}

In most studies, the failure time is subject to right censoring. We
now introduce $C$ to be the time to censoring. The observed data
are $O=\{X=\min(T,C),\Delta=1(T\leq C),\overline{H}_{X}\}$. In the
presence of censoring, we may not observe $T$ and calculate $U(\psi)$,
and consequently the estimating equation (\ref{eq:ee4}) is not feasible
to solve. A naive solution is to replace $T$ in $U(\psi)$ by $X$
and use $\widetilde{U}(\psi)=\int_{0}^{X}\exp(\psi A_{s})\de s$;
however, $\widetilde{U}(\psi^{*})$ depends on the whole treatment
process and therefore is not independent of $M_{V}(t)$ given $\overline{H}_{t}$,
which renders the estimating equation (\ref{eq:ee4}) biased \citep{hernan2005structural}.
\citet{robins1998structural} proposed a strategy to deal with administrative
censoring. In this case, $C$ is independent of all other variables.
This strategy replaces $U(\psi)$ by a function of $U(\psi)$ and
$C$ which is always observable. For illustration, we consider $U(\psi)=\int_{0}^{T}\exp\left(\psi A_{u}\right)\de u$
and 
\[
C(\psi)=\min_{a_{s}\in\{0,1\}}\int_{0}^{C}\exp\left(\psi a_{s}\right)\de s=\begin{cases}
C, & \text{if }\psi\geq0,\\
C\exp\left(\psi\right), & \text{if }\psi<0.
\end{cases}
\]
Then, $\widetilde{U}(\psi^{*})=\min\{U(\psi^{*}),C(\psi^{*})\}$ and
$\Delta(\psi^{*})=1\{U(\psi^{*})<C(\psi^{*})\}$ are the two functions
that are independent of $M_{V}(t)$ given $\overline{H}_{t}$ and
are always computable; see the supplementary material. G-estimator
is then constructed based on $\widetilde{U}(\psi)$ and $\Delta(\psi)$.
In this approach, for subjects with $T<C$, it may be possible that
$U(\psi)>C(\psi)$ and $\Delta(\psi)=0$, which considers these subjects
who actually were observed to fail as if they were censored. Therefore,
this approach is often called artificial censoring. Artificial censoring
suffers from many drawbacks. First, the resulting estimating equation
is not smooth in $\psi$, and therefore estimation and inference are
challenging \citep{joffe2012g}. Second, if the censoring mechanism
is dependent, the estimators will be inconsistent \citep{robins1998structural}.
To avoid the drawbacks of artificial censoring and also allow for
more general censoring mechanisms, we consider an alternative approach
using inverse probability of censoring weighting. \citet{robins1998structural}
suggested and \citet{witteman1998g} applied the weighting approach
to deal with censoring by competing risks in the deterministic structural
nested failure time models with discretized data. 

We assume an ignorable censoring mechanism as follows.

\begin{assumption}\label{asp:NUC-1}The hazard of censoring is 
\begin{eqnarray}
\lambda_{C}(t\mid F,T>t) & = & \lim_{h\rightarrow0}h^{-1}P(t\leq C<t+h\mid C\geq t,F,T>t)\nonumber \\
 & = & \lim_{h\rightarrow0}h^{-1}P(t\leq C<t+h\mid C\geq t,\overline{H}_{t},T>t)=\lambda_{C}\left(t\mid\overline{H}_{t},T>t\right),\label{eq:censoring}
\end{eqnarray}
denoted by $\lambda_{C}\left(t\mid\overline{H}_{t}\right)$ for shorthand. 

\end{assumption}

Assumption \ref{asp:NUC-1} states that $\lambda_{C}(t\mid F,T>t)$
depends only on the past treatment and covariate history until time
$t$, but not on the future variables and failure time. This assumption
holds if the set of historical covariates contains all prognostic
factors for the failure time that affect the lost to follow up at
time $t$. Under this assumption, the missing data due to censoring
are missing at random \citep{rubin1976inference}. In the presence
of censoring, redefine $V$ as the time to treatment discontinuation
or failure or censoring, whichever came first. We show in the supplementary
material that $\lambda_{V}(t\mid\overline{H}_{t})$ is equal to $\lambda_{V}(t\mid\overline{H}_{t},C\geq t)$
and therefore can be estimated conditional on $V\geq t$ with the
new definition of $V.$ From $\lambda_{C}\left(t\mid\overline{H}_{t}\right)$,
we define $K_{C}\left(t\mid\overline{H}_{t}\right)=\exp\left\{ -\int_{0}^{t}\lambda_{C}\left(u\mid\overline{H}_{u}\right)\de u\right\} ,$
which is the probability of the subject not being censored before
time $t$. For regularity, we also impose a positivity condition for
$K_{C}\left(t\mid\overline{H}_{t}\right)$.

\begin{assumption}[Positivity]\label{asp:positivity}There exists
a constant $\delta$ such that with probability one, $K_{C}\left(t\mid\overline{H}_{t}\right)\geq\delta>0$
for $t$ in the support of $T$.

\end{assumption}

Under Assumptions \ref{asump:UNC}\textendash \ref{asp:positivity},
$\psi^{*}$ is identifiable; see the supplementary material for proof.
Following \citet{rotnitzky2009analysis}, the main idea of inverse
probability of censoring weighting is to re-distribute the weights
for the censored subjects to the remaining ``similar'' uncensored
subjects. 

\begin{theorem}\label{Thm:ipcw}Under Assumptions \ref{asump:UNC}\textendash \ref{asp:positivity},
the unbiased estimating equation for $\psi^{*}$ is 
\begin{equation}
P_{n}\left\{ \frac{\Delta}{K_{C}\left(T\mid\overline{H}_{T}\right)}G(\psi;F)\right\} =0,\label{eq:IPCW}
\end{equation}
where $G(\psi;F)$ is defined in (\ref{eq:G}).

\end{theorem}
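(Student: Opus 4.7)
My plan is to verify that $E\{\Delta G(\psi^*;F)/K_C(T\mid\overline{H}_T)\}=0$ under Assumptions 1--3, from which the sample statement in (\ref{eq:IPCW}) follows by the law of large numbers. I would proceed in two stages. First, note that the identity $E\{G(\psi^*;F)\}=0$ has already been recorded in the discussion preceding (\ref{eq:ee4}): under Assumption \ref{asump:UNC}, $U(\psi^*)$ is conditionally independent of $M_V(u)$ given $(\overline{H}_u,V\geq u)$, so the centered integrand $c(\overline{H}_u)[U(\psi^*)-E\{U(\psi^*)\mid\overline{H}_u,V\geq u\}]$ integrates to a mean-zero stochastic integral against the martingale $M_V$. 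The task then reduces to establishing that the inverse probability of censoring weight recovers this full-data expectation, namely
\begin{equation*}
E\left\{\frac{\Delta}{K_C(T\mid\overline{H}_T)}G(\psi^*;F)\right\}=E\{G(\psi^*;F)\}.
\end{equation*}

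Second, I would condition on the full data $F=(T,\overline{H}_T)$. Since both $G(\psi^*;F)$ and $K_C(T\mid\overline{H}_T)$ are $F$-measurable, the tower property yields
\begin{equation*}
E\left\{\frac{\Delta\, G(\psi^*;F)}{K_C(T\mid\overline{H}_T)}\right\}=E\left\{\frac{G(\psi^*;F)}{K_C(T\mid\overline{H}_T)}\,E(\Delta\mid F)\right\},
\end{equation*}
so it suffices to show $E(\Delta\mid F)=K_C(T\mid\overline{H}_T)$. Given $F$, the event $\{\Delta=1\}$ coincides with $\{C\geq T\}$, and under Assumption \ref{asp:NUC-1} the censoring hazard conditional on $F$ on the set $\{T>t\}$ agrees with $\lambda_C(t\mid\overline{H}_t)$; the product-integral representation of the conditional survival function then gives
\begin{equation*}
P(C\geq T\mid F)=\exp\left\{-\int_0^T\lambda_C(u\mid\overline{H}_u)\,du\right\}=K_C(T\mid\overline{H}_T).
\end{equation*}
Assumption \ref{asp:positivity} guarantees $\Delta/K_C(T\mid\overline{H}_T)\leq\delta^{-1}$ almost surely, so all expectations are finite and the conditioning steps are justified. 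Combining the two stages delivers (\ref{eq:IPCW}).

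The step most likely to require care is the passage from Assumption \ref{asp:NUC-1} to the product-integral formula for $P(C\geq T\mid F)$. Assumption \ref{asp:NUC-1} equates two hazards each defined on $\{C\geq t,T>t\}$, whereas the argument needs a statement about the full conditional law of $C$ given all of $F$, including future covariate values beyond $t$. I would handle this by invoking the coarsening-at-random structure implied by Assumption \ref{asp:NUC-1}: verify that the censoring counting process $N_C(t)=I(C\leq t,\Delta=0)$ admits compensator $\int_0^{t\wedge X}\lambda_C(u\mid\overline{H}_u)\,du$ with respect to the enlarged filtration generated by $F$ together with the observed censoring history, and then identify $P(C\geq T\mid F)$ via the Doob--Meyer decomposition applied on $[0,T]$. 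This technical step, together with the analogous verification that redefining $V$ to include censoring preserves the martingale structure of $M_V$, would be relegated to the supplementary material in the style of the paper's other proofs.
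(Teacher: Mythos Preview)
Your proposal is correct and follows essentially the same route as the paper: condition on $F$, use that $G(\psi^*;F)$ and $K_C(T\mid\overline{H}_T)$ are $F$-measurable, and reduce to $E(\Delta\mid F)=K_C(T\mid\overline{H}_T)$ via the censoring assumption, then invoke $E\{G(\psi^*;F)\}=0$. Your discussion of the product-integral justification and the role of positivity is in fact more careful than the paper's own proof, which simply asserts the third equality ``by the dependent censoring mechanism specified in (\ref{eq:censoring}).''
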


Theorem \ref{Thm:ipcw} assumes that $\lambda_{C}(t\mid\overline{H}_{t})$
is known. Similar to $\lambda_{V}(t\mid\overline{H}_{t})$, we posit
a proportional hazards model with time-dependent covariates: 
\begin{equation}
\lambda_{C}\left(t\mid\overline{H}_{t}\right)=\lambda_{C,0}(t)\exp\left\{ \gamma_{C}^{\T}g_{C}(t,\overline{H}_{t})\right\} ,\label{eq:ph-C}
\end{equation}
where $\lambda_{C,0}(t)$ is unknown and non-negative, $g_{C}(t,\overline{H}_{t})$
is a pre-specified function of $t$ and $\overline{H}_{t}$, and $\gamma_{C}$
is a vector of unknown parameters.

To summarize, the algorithm for developing an estimator of $\psi^{*}$
is as follows. 
\begin{description}
\item [{Step$\ $1.}] Using the data $(V_{i},\Gamma_{i},\overline{H}_{V_{i},i})$,
$i=1,\ldots,n$, fit a model for $\lambda_{V}\left(t\mid\overline{H}_{t}\right)=\lambda_{V,0}(t)\exp\left\{ \gamma_{V}^{\T}g_{V}(t,\overline{H}_{t})\right\} $.
To estimate $\gamma_{V}$, treat the treatment discontinuation as
``failure\textquotedblright{} and the failure event and censoring
as ``censored'' observations in the time-dependent proportional
hazards model. Once we have an estimate of $\gamma_{V}$, $\widehat{\gamma}_{V},$
we can estimate the cumulative baseline hazard, $\lambda_{V,0}(t)\de t$
using the Breslow estimator 
\[
\widehat{\lambda}_{V,0}(t)\de t=\frac{\sum_{i=1}^{n}\de N_{V,i}(t)}{\sum_{i=1}^{n}\exp\left\{ \widehat{\gamma}_{V}^{\T}g_{V}(t,\overline{H}_{t,i})\right\} Y_{V_{i}}(t)}.
\]
Then, obtain $\widehat{M}_{V}(t)=N_{V}(t)-\int_{0}^{t}\exp\left\{ \widehat{\gamma}_{V}^{\T}g_{V}(u,\overline{H}_{u})\right\} \widehat{\lambda}_{V,0}(u)Y_{V}(u)\de u$. 
\item [{Step$\ $2.}] Using the data $(X_{i},\Delta_{i},\overline{H}_{X_{i},i})$,
$i=1,\ldots,n$, derive the estimator of $\lambda_{C}\left(t\mid\overline{H}_{t}\right)=\lambda_{C,0}(t)\exp\left\{ \gamma_{C}^{\T}g_{C}(t,\overline{H}_{t})\right\} $,
and obtain the estimator of $K_{C}(T_{i}\mid\overline{H}_{T_{i}})$.
To estimate $\gamma_{C}$, treat censoring as ``failure\textquotedblright{}
and the failure event as ``censored'' observations in the time-dependent
proportional hazards model. Once we have an estimate of $\gamma_{C}$,
$\widehat{\gamma}_{C},$ we can estimate $\lambda_{C,0}(t)\de t$
using the Breslow estimator 
\[
\widehat{\lambda}_{C,0}(t)\de t=\frac{\sum_{i=1}^{n}\de N_{C,i}(t)}{\sum_{i=1}^{n}\exp\left\{ \widehat{\gamma}_{C}^{\T}g_{C}(t,\overline{H}_{t,i})\right\} Y_{C_{i}}(t)}
\]
where $N_{C}(t)=I(C\leq t,\text{\ensuremath{\Delta}=0})$ and $Y_{C}(t)=I(C\geq t)$
are the counting process and the at-risk process of observing censoring.
Then, we estimate $K_{C}\left(t\mid\overline{H}_{t}\right)$ by 
\[
\widehat{K}_{C}\left(t\mid\overline{H}_{t}\right)=\prod_{0\leq u\leq t}\left[1-\exp\left\{ \widehat{\gamma}_{C}^{\T}g_{C}(u,\overline{H}_{u})\right\} \widehat{\lambda}_{C,0}\left(u\right)\de u\right].
\]
\item [{Step$\ $3.}] We obtain the estimator $\widehat{\psi}$ of $\psi$
by solving 
\begin{equation}
P_{n}\left\{ \frac{\Delta}{\widehat{K}_{C}\left(T\mid\overline{H}_{T}\right)}\int c(\overline{H}_{u})\left[U(\psi)-E\left\{ U(\psi)\mid\overline{H}_{u},V\geq u;\widehat{\xi}\right\} \right]\de\widehat{M}_{V}(u)\right\} =0,\label{eq:IPCW2}
\end{equation}
where we estimate $E\left\{ U(\psi)\mid\overline{H}_{u},V\geq u;\xi\right\} $
by regressing $\widehat{K}_{C}\left(T\mid\overline{H}_{T}\right)^{-1}\Delta U(\psi)$
on $(X_{0},L_{u},u)$ restricted to subjects with $V\geq u$. The
estimating equation (\ref{eq:IPCW2}) is continuously differentiable
on $\psi$ and thus can be generally solved using a Newton-Raphson
procedure \citep{atkinson1989introduction}. For example, one can
use the function ``multiroot'' in R.
\end{description}
\begin{remark}It is worth discussing the connection between the proposed
estimator and the existing framework for discrete time points. If
the processes take observations at discrete times $\{t_{0},\ldots,t_{K}\}$,
then, for $t=t_{m}$, $\overline{H}_{t}=\{H_{t_{1}},\ldots,H_{t_{m}}\}$,
$\de N_{T}(t)$ is a binary treatment indicator, and $\int_{0}^{t}\lambda_{T}(u\mid\overline{H}_{u})Y_{T}(u)\de u$
becomes the propensity score $\pr\{\de N_{T}(t)=1\mid\overline{H}_{t}\}$.
As a result, (\ref{eq:IPCW2}) with $E\left\{ U(\psi)\mid\overline{H}_{u},V\geq u;\widehat{\xi}\right\} $
being zero simplifies to the existing estimating equation for $\psi^{*}$.
Importantly, (\ref{eq:IPCW2}), for the first time in the literature,
provides the semiparametric doubly robust estimator $\widehat{\psi}$
even for discrete time setting, in that $\widehat{\psi}$ is consistent
if either the model for the treatment process or the failure time
model is correctly specified, under correct model specifications for
the treatment effect mechanism and the censoring.

\end{remark}

\subsection{Asymptotic theory and variance estimation}

In this section we discuss the asymptotic properties of our proposed
estimator with technical details presented in the supplementary material.
To reflect the dependence of the estimating equation on the nuisance
models, denote (\ref{eq:IPCW2}) as $P_{n}\Phi(\psi,\widehat{\xi},\widehat{M}_{V},\widehat{K}_{C};F)=0$,
where $\Phi(\psi,\xi,M_{V},K_{C};F)=\{K_{C}\left(T\mid\overline{H}_{T}\right)\}^{-1}\Delta\int c(\overline{H}_{u})[U(\psi)-E\{U(\psi)\mid\overline{H}_{u},V\geq u;\xi\}]\de M_{V}(u)$.
Let the probability limits of $\widehat{\xi}$, $\widehat{M}_{V}$,
and $\widehat{K}_{C}$ be $\xi^{*}$, $M_{V}^{*}$, and $K_{C}^{*}$,
respectively. We impose standard regularity conditions for $Z$-estimators
\citep{van1996weak} as formulated by Assumptions \ref{asump:donsker}\textendash \ref{asump: IF}.
Roughly speaking, these conditions restrict the flexibility and convergence
rates of the nuisance estimators; e.g., we assume that $\Phi(\psi,\xi,M_{V},K_{C};F)$
and $\partial\Phi(\psi,\xi,M_{V},K_{C};F)/\partial\psi$ belong to
$P$-Donsker classes, and the regularity conditions ensure that
\begin{multline*}
E\left(\int c(\overline{H}_{u})\left[E\left\{ \left(\begin{array}{c}
U(\psi^{*})\\
\partial U(\psi^{*})/\partial\psi
\end{array}\right)\mid\overline{H}_{u},V\geq u;\widehat{\xi}\right\} \right.\right.\\
-\left.\left.E\left\{ \left(\begin{array}{c}
U(\psi^{*})\\
\partial U(\psi^{*})/\partial\psi
\end{array}\right)\mid\overline{H}_{u},V\geq u;\xi^{*}\right\} \right]\de\left\{ \widehat{M}_{V}(u)-M_{V}^{*}(u)\right\} \right)=o_{p}(n^{-1/2}).
\end{multline*}
Under Assumptions \ref{asp:positivity} and \ref{asump:donsker}\textendash \ref{asump: IF},
Theorem \ref{thm:s1} states that if $K_{C}$ is correctly specified,
and if either $E\{U(\psi)\mid\overline{H}_{u},V\geq u;\xi\}$ or $M_{V}$
is correctly specified, $\widehat{\psi}$ solving (\ref{eq:IPCW})
with the estimated nuisance models is still consistent and asymptotically
normal, with the influence function $\widetilde{\Phi}(\psi^{*},\xi^{*},M_{V}^{*},K_{C}^{*};F)$.

We can estimate the variance of $\widehat{\psi}$ either by the empirical
variance of the estimated influence function or by resampling. If
all nuisance models, $\xi,$ $M_{V}$, and $K_{C}$, are correctly
specified, we obtain an analytical expression for $\widetilde{\Phi}(\psi^{*},\xi^{*},M_{V}^{*},K_{C}^{*};F)$
as in (\ref{eq:tilde-J}). We can then estimate $\widetilde{\Phi}(\psi^{*},\xi^{*},M_{V}^{*},K_{C}^{*};F)$
by plugging in estimates of $\psi^{*}$, $\xi^{*}$, $M_{V}^{*}$,
$K_{C}^{*}$, and the required expectations, denoted by $\widehat{\Phi}(\widehat{\psi},\widehat{\xi},\widehat{M}_{V},\widehat{K}_{C};F)$.
Then, the estimated variance of $n^{1/2}(\widehat{\psi}-\psi^{*})$
is 
\begin{equation}
P_{n}\left\{ \widehat{\Phi}(\widehat{\psi},\widehat{\xi},\widehat{M}_{V},\widehat{K}_{C};F)\widehat{\Phi}(\widehat{\psi},\widehat{\xi},\widehat{M}_{V},\widehat{K}_{C};F)^{\T}\right\} .\label{eq:VE}
\end{equation}
However, when either $\xi$ or $M_{V}$ is correctly specified but
not both, characterizing $\widetilde{\Phi}(\psi^{*},\xi^{*},M_{V}^{*},K_{C}^{*};F)$
is difficult, and therefore approximating (\ref{eq:VE}) is no longer
feasible. To avoid the technical difficulty, we recommend estimating
the asymptotic variance with the resampling methods such as bootstrap
and Jackknife \citep{efron1979,efron1981jackknife}. In this case,
the resampling works because $\widehat{\psi}$ is regular and asymptotically
normal.

\section{Simulation study\label{sec:Simulation-Study}}

We evaluate the finite sample performance of the proposed estimator
on simulated data sets. We generate $U$ from Exp$(0.2)$ and generate
the covariate process $(X_{0},L_{t})$ had the treatment always been
withheld, where $X_{0}\sim$Bernoulli($0.55$). To generate $L_{t},$
we first generate a $1\times3$ row vector following a multivariate
normal distribution with mean equal to\textbf{ $(0.2U-4)$} and covariance
equal to $0.7^{|i-j|}$ for $i,j=1,2,3$. This vector represents the
values of $L_{t}$ at times $t_{1}=0$, $t_{2}=5$, and $t_{3}=10$.
We assume that the time-dependent variable remains constant between
measurements. We generate the time until treatment discontinuation,
$V_{1}$, according to a proportional hazards model $\lambda_{V}(t\mid X_{0},\overline{L}_{t})=0.15\exp(0.15X_{0}+0.15L_{t}).$
This generates the treatment process $A_{t}$; i.e., $A_{t}=1$ if
$t\leq V_{1}$ and $A_{t}=0$ if $t>V_{1}$. The observed time-dependent
covariate process is $L_{t}$ if $t\le V_{1}$ and $L_{t}+\log(t-V_{1})$
if $t>V_{1}$ to reflect that the covariate process is affected after
treatment discontinuation. Let the history of covariates and treatment
until time $t$ be $\overline{H}_{t}=(X_{0},\overline{L}_{t},\overline{A}_{t-})$.
We generate $T$ according to $U\sim\int_{0}^{T}\exp(\psi^{*}A_{u})\de u$
as follows. Let $T_{1}=U\exp(-\psi^{*})$. If $T_{1}<V_{1}$, $T=T_{1}$;
otherwise $T=U+V_{1}-V_{1}\exp(\psi^{*})$. Under the above data generating
mechanism, the potential failure time under $\overline{a}_{T}$ also
follows a Cox marginal structural model with the hazard rate at $u$,
$\lambda_{0}(u)\exp(\psi^{*}A_{u})$ \citep{young2010relation}. We
generate $C$ according to a proportional hazards model with $\lambda_{C}(t\mid X_{0},\overline{L}_{t},C\geq t)=0.025\exp(0.15X_{0}+0.15L_{t}).$
Let $X=\min(T,C)$. If $T<C$, $\Delta=1$; otherwise $\Delta=0$.
Finally, let $V=\min(V_{1},T,C)$ and $\Gamma$ be the indicator of
treatment discontinuation before the time to failure or censoring;
i.e., if $V=V_{1}$, $\Gamma=1$; otherwise $\Gamma=0$. The observed
data are $(X_{i},\Delta_{i},V_{i},\Gamma_{i},\overline{H}_{X_{i},i})$
for $i=1,\ldots,n$. We consider $\psi^{*}\in\{-0.5,0,0.5\}$. From
our data generating mechanism, $50\%-58\%$ observations are censored,
and $70\%-80\%$ treatment discontinuation times are observed before
the time to failure or censoring.

We consider the following estimators of $\psi^{*}$: (i) an naive
estimator $\widehat{\psi}_{\naive}$ by solving (\ref{eq:ee4}) with
$T$ in $U(\psi)=\int_{0}^{T}\exp(\psi^{*}A_{u})\de u$ replaced by
$X$; (ii) an inverse probability of weighting estimator of the Cox
marginal structural model $\widehat{\psi}_{\msm}$ \citep{yang2018modeling};
(iii) a simple inverse probability of censoring weighting estimator
$\widehat{\psi}_{\ipcw}$ by solving $P_{n}[\{\widehat{K}_{C}\left(T\mid\overline{H}_{T}\right)\}^{-1}\Delta\int c(\overline{H}_{u})U(\psi)\de M_{V}(u)]=0;$
and (iv) the proposed doubly robust estimator $\widehat{\psi}_{\dr}$
by solving (\ref{eq:IPCW2}) with $E\left\{ U(\psi)\mid\overline{H}_{u},V\geq u\right\} $
reducing to a tractable function $E\left\{ U(\psi)\mid\overline{H}_{0}\right\} $.
Note that $\widehat{\psi}_{\ipcw}$ is the special case of $\widehat{\psi}_{\dr}$
with $E\left\{ U(\psi)\mid\overline{H}_{u},V\geq u\right\} $ being
misspecified as zero. Moreover, to demonstrate the impact of data
discretization, we include the discrete-time g-estimator $\widehat{\psi}_{\disc}$
applied to the pre-processed data with the grid size $51$. We present
the details for $\widehat{\psi}_{\msm}$ and $\widehat{\psi}_{\disc}$
in the supplementary material. For estimators requiring the choice
of $c(\overline{H}_{u})$, we compare a simple choice $c(\overline{H}_{u})=A_{u-}$
and the optimal choice $c^{\opt}(\overline{H}_{u})$ in (\ref{eq:c-opt}),
where $E\{\partial\dot{U}_{u}(\psi)/\partial\psi\mid\overline{H}_{u},V=u\}=E(V-u\mid\overline{H}_{u},V\geq u)$.
We approximate $E(V-u\mid\overline{H}_{u},V\geq u)$ by the mean of
exponential distribution with the rate $\widehat{\lambda}_{V}(u)$
and assume that $\mathrm{var}\left\{ U(\psi)\mid\overline{H}_{u},V\geq u\right\} $
is a constant, which is common practice in the generalized estimating
equation literature. We approximate $E\left\{ U(\psi)\mid\overline{H}_{u},V\geq u\right\} $
by regressing $\widehat{K}_{C}\left(T\mid\overline{H}_{T}\right)^{-1}\Delta U(\psi)$
on $(X_{0},L_{0})$. To evaluate the double robustness, we consider
two specifications for the hazard of treatment discontinuation: (a)
the true proportional hazards model, and (b) a misspecified Kaplan-Meier
model \citep{Kaplan1958}. In calculating the censoring weights, we
specify the censoring model as the true proportional hazards model.
We assess the impact of misspecification of the censoring model in
the supplementary material. For standard errors, we consider the delete-a-group
Jackknife variance estimator with $500$ groups \citep{kott1998jackk2}.

\begin{table}
\caption{\label{tab:Results1}Simulation results: bias, standard deviation,
root mean squared error, and coverage rate of $95\%$ confidence intervals
for $\exp(\psi^{*})$ over $1,000$ simulated datasets: Scenario 1/2
the treatment discontinuation model is correctly specified/misspecified}

\centering{}\resizebox{\textwidth}{!}{

\begin{tabular}{cccccccccccc}
 &  &  & \multicolumn{3}{c}{$\psi{}^{*}=-0.5$} & \multicolumn{3}{c}{$\psi{}^{*}=0$} & \multicolumn{3}{c}{$\psi{}^{*}=0.5$}\tabularnewline
\multicolumn{3}{c}{} & Bias & S.E. & C.R. & Bias & S.E. & C.R. & Bias & S.E. & C.R.\tabularnewline
 & \multirow{2}{*}{$\widehat{\psi}_{\naive}$} & $c$ & 0.06 & 0.048 & 76.8 & 0.02 & 0.069 & 95.6 & -0.06 & 0.112 & 92.4\tabularnewline
 &  & $c^{\opt}$ & 0.05 & 0.043 & 78.4 & 0.02 & 0.063 & 95.0 & -0.05 & 0.107 & 91.8\tabularnewline
 & \multirow{2}{*}{$\widehat{\psi}_{\ipcw}$} & $c$ & -0.01 & 0.089 & 95.2 & -0.02 & 0.123 & 97.2 & -0.02 & 0.191 & 95.6\tabularnewline
Scenario 1 &  & $c^{\opt}$ & -0.01 & 0.070 & 96.4 & -0.02 & 0.095 & 97.0 & -0.02 & 0.148 & 95.6\tabularnewline
 & \multirow{2}{*}{$\widehat{\psi}_{\dr}$} & $c$ & 0.00 & 0.053 & 95.2 & -0.00 & 0.076 & 96.8 & -0.01 & 0.125 & 95.4\tabularnewline
 &  & $c^{\opt}$ & 0.00 & 0.049 & 95.4 & -0.00 & 0.071 & 96.0 & -0.00 & 0.118 & 94.8\tabularnewline
 & $\widehat{\psi}_{\msm}$ &  & -0.00 & 0.050 & 95.8 & 0.00 & 0.081 & 96.4 & 0.00 & 0.148 & 95.2\tabularnewline
 & $\widehat{\psi}_{\disc}$ &  & -0.37 & 0.041 & 0.0 & -0.61 & 0.055 & 0.0 & -1.01 & 0.092 & 0.6\tabularnewline
 & \multirow{2}{*}{$\widehat{\psi}_{\naive}$} & $c$ & 0.22 & 0.065 & 4.8 & 0.24 & 0.097 & 30.4 & 0.26 & 0.164 & 66.0\tabularnewline
 &  & $c^{\opt}$ & 0.22 & 0.066 & 5.4 & 0.24 & 0.097 & 31.8 & 0.26 & 0.163 & 68.2\tabularnewline
 & \multirow{2}{*}{$\widehat{\psi}_{\ipcw}$} & $c$ & 0.16 & 0.098 & 62.4 & 0.23 & 0.140 & 64.4 & 0.33 & 0.239 & 79.6\tabularnewline
Scenario 2 &  & $c^{\opt}$ & 0.16 & 0.098 & 62.2 & 0.23 & 0.140 & 65.8 & 0.33 & 0.234 & 79.4\tabularnewline
 & \multirow{2}{*}{$\widehat{\psi}_{\dr}$} & $c$ & 0.01 & 0.048 & 95.0 & 0.00 & 0.070 & 96.4 & 0.00 & 0.115 & 95.4\tabularnewline
 &  & $c^{\opt}$ & 0.01 & 0.048 & 95.4 & 0.00 & 0.070 & 96.6 & 0.00 & 0.115 & 95.2\tabularnewline
 & $\widehat{\psi}_{\msm}$ &  & 0.13 & 0.069 & 54.4 & -0.40 & 0.051 & 57.6 & 0.36 & 0.217 & 75.6\tabularnewline
 & $\widehat{\psi}_{\disc}$ &  & -0.25 & 0.035 & 0.0 & 0.22 & 0.118 & 0.0 & -0.72 & 0.092 & 1.0\tabularnewline
\end{tabular}}
\end{table}

Table \ref{tab:Results1} summarizes the simulation results with $n=1,000$.
The naive estimator $\widehat{\psi}_{\naive}$ is biased, and its
bias becomes larger as $|\psi^{*}|$ increases. In scenario 1 where
the treatment process model is correctly specified, $\widehat{\psi}_{\ipcw}$
, $\widehat{\psi}_{\dr}$, and $\widehat{\psi}_{\msm}$ show small
biases across all scenarios with different values $\psi^{*}$. Note
that $\widehat{\psi}_{\ipcw}$ is a special case of the proposed estimator
with $E\left\{ U(\psi)\mid\overline{H}_{u},V\geq u\right\} $ being
misspecified as zero. This demonstrates that the proposed estimator
is robust to misspecification of $E\left\{ U(\psi)\mid\overline{H}_{u},V\geq u\right\} $
given that the treatment process model is correctly specified. If
additionally $E\left\{ U(\psi)\mid\overline{H}_{u},V\geq u\right\} $
is well approximated, $\widehat{\psi}_{\dr}$ gains estimation efficiency
over $\widehat{\psi}_{\ipcw}$. Moreover, $\widehat{\psi}_{\dr}$
with $c^{\opt}$ are more efficient than that with $c$. Moreover,
in scenario 1, $\widehat{\psi}_{\dr}$ has smaller standard errors
than $\widehat{\psi}_{\msm}$. This is because $\widehat{\psi}_{\msm}$
involves weighting directly by the inverse of the propensity score,
whereas $\widehat{\psi}_{\dr}$ utilizes the propensity score not
in a form of inverse weights and therefore avoids the possibly large
variability due to weighting. In scenario 2 where the treatment process
model is misspecified, $\widehat{\psi}_{\ipcw}$ and $\widehat{\psi}_{\msm}$
show large biases; however, $\widehat{\psi}_{\dr}$ still has small
biases, confirming its double robustness. The Jackknife variance estimation
performs well for $\widehat{\psi}_{\dr}$ and produces coverage rates
close to the nominal coverage. Lastly, we note large biases in the
discrete-time g-estimator $\widehat{\psi}_{\disc}$, which illustrates
the consequence of data pre-processing for the subsequent analysis.

\section{Application to the GARFIELD data\label{sec:Application}}

We present an analysis for the Global Anticoagulant Registry in the
FIELD with Atrial Fibrillation (GARFIELD-AF) registry study, an observational
study of patients with newly diagnosed atrial fibrillation. See the
study website at http://www.garfieldregistry.org/ for details. Our
analysis includes $22,811$ patients, who were enrolled between April
2013 and August 2016 and received oral anticoagulant therapy for stroke
prevention. Our goal is to investigate the effect of discontinuation
of oral anticoagulant therapy in patients with atrial fibrillation.
The primary end point is the composite clinical outcome including
death, non-haemorrhagic stroke, systemic embolism, and myocardial
infarction. We define a patient as permanently discontinuing if treatment
was stopped for at least $7$ days and never re-started afterwards.
In our study, $9.5\%$ of patients discontinued oral anticoagulant
therapy over a median follow-up of $710$ days with an interquartile
range $(487,731)$ days; 43.8\% of discontinuations were within the
first $4$ months of the start of treatment. Among those who discontinued
treatment, $512$ patients stopped the treatment for more than $7$
days and went back on treatment. This is called switching. We censor
the switches at the time of restarting treatment. This censoring mechanism
is not likely to be completely at random, because patients with poor
prognosis may be more likely to switch. We assume a dependent censoring
mechanism and use inverse probability of censoring weighting.

To answer the clinical question of interest, we consider the structural
failure time model $U(\psi^{*})=\int_{0}^{T}\exp(\psi^{*}A_{u})\de u$.
Under this model, if a patient had been on treatment continuously,
\textbf{$T=U(\psi^{*})\exp(-\psi^{*})$}, so $U(\psi^{*})\{\exp(-\psi^{*})-1\}$
is the time gained/reduced while on treatment. We focus on estimating
the multiplicative factor $\exp(\psi^{*})$. Table \ref{tab:Results}
reports the results from the naive estimator and the proposed doubly
robust estimator as described in $\mathsection$ \ref{sec:Simulation-Study}.
We describe the details for the nuisance models in the supplementary
material. Although the effect sizes may be a little different between
the naive analysis and the proposed analysis, qualitatively they all
suggest that treatment is beneficial for prolonging the time to clinical
events, and therefore treatment discontinuation is harmful. If a patient
had been on treatment continuously versus if the patient had never
taken treatment, the time to clinical outcomes would have been $\exp(-\widehat{\psi})=1/0.64=1.56$
times longer. Importantly, the proposed analysis is designed to address
the well-formulated question for investigating the effect of treatment
discontinuation.

\begin{table}
\caption{\label{tab:Results}Results of the effect of oral anticoagulant therapy
on the composite outcome: $\exp(\psi^{*})$ is the causal estimand }

\centering{}%
\begin{tabular}{ccccc}
 & Est  & S.E.  & C.I.  & p-value \tabularnewline
Naive method  & 0.68  & 0.176  & (0.34,1.03)  & 0.07\tabularnewline
Proposed method  & 0.64  & 0.179  & (0.29,0.99)  & 0.04\tabularnewline
\end{tabular}
\end{table}

\section{Discussion\label{sec:Discussions}}

The proposed framework of structural failure time model can be used
to adjust for time-varying confounding and selection bias with irregularly
spaced observations under three assumptions of no unmeasured confounders,
ignorability of censoring, and positivity. As discussed previously,
Assumptions 1 and 2 hold if all variables that are related to both
treatment discontinuation and outcome and that are related to both
censoring and outcome are measured. Although essential, they are not
verifiable based on the observed data but rely on subject matter experts
to assess their plausibility. The future work will investigate the
sensitivity to these assumptions using the methods in \citet{yang2017sensitivity}.
Assumption 3 states that all subjects have nonzero probabilities of
staying on study before the failure time. This assumption requires
the absence of predictors that are deterministic in relation to censoring
and outcome. Practitioners should carefully examine the question at
hand to eliminate deterministic violations of positivity.

Our framework can also be extended in the following directions. First,
the proposed doubly robust estimator with respect to model specifications
for the treatment process and the baseline failure time; however,
it still relies on a correct specification of the censoring mechanism.
If the censoring model is misspecified, the proposed estimator may
be biased; see the additional simulation results in the supplementary
material. It would be interesting to construct an improved estimator
that is multiply robust in the sense that such an estimator is consistent
in the union of the three models \citep{molina2017multiple}. Second,
it is critical to derive test procedures for evaluating the goodness-of-fit
of the treatment effect model. The key insight is that we have more
unbiased estimating equations than the model parameters. In future
work, we will derive tests based on over-identification restrictions
tests \citep{yang2015gof} for evaluating a treatment effect model.

\section*{Acknowledgment}

We benefited from the comments from two reviewers and Anastasio A.
Tsiatis. Dr. Yang is partially supported by ORAU, NSF DMS 1811245,
and NCI P01 CA142538.

\section*{Supplementary Material}

Supplementary material available at \textit{Biometrika} online includes
proofs, technical details and additional simulation. R package is
available at https://github.com/shuyang1987/contTimeCausal.

\bibliographystyle{dcu}
\bibliography{C:/Dropbox/bib/ci}

\newpage{}

\global\long\def\theequation{S\arabic{equation}}%
 \setcounter{equation}{0}

\global\long\def\thelemma{S\arabic{lemma}}%
 \setcounter{lemma}{0}

\global\long\def\theexample{S\arabic{example}}%
 \setcounter{equation}{0}

\global\long\def\thesection{S\arabic{section}}%
 \setcounter{section}{0}

\global\long\def\thetheorem{S\arabic{theorem}}%
 \setcounter{equation}{0}

\global\long\def\thecondition{S\arabic{condition}}%
 \setcounter{equation}{0}

\global\long\def\thetable{S\arabic{table}}%
 \setcounter{equation}{0}

\global\long\def\theremark{S\arabic{remark}}%
 \setcounter{equation}{0}

\global\long\def\thestep{S\arabic{step}}%
 \setcounter{equation}{0}

\global\long\def\theassumption{S\arabic{assumption}}%
 \setcounter{assumption}{0}

\global\long\def\theproof{S\arabic{proof}}%
 \setcounter{equation}{0}

\global\long\def\theproposition{S{proposition}}%
 \setcounter{equation}{0} 

\setcounter{page}{1}

\section*{Supplementary Material}

\section{A lemma}

We provide a lemma for the martingale process, which is useful in
our derivation later.

Consider the Hilbert space $\mathcal{H}$ of all $p$-dimensional,
mean-zero finite variance measurable functions of $F$, $h(F)$, equipped
with the covariance inner product $<h_{1},h_{2}>=E\left\{ h_{1}(F)^{\T}h_{2}(F)\right\} $
and the norm $||h||=\left[E\left\{ h(F)^{\T}h(F)\right\} \right]{}^{1/2}<\infty$.

\begin{lemma}\label{lemma:1}Under Assumption \ref{asump:UNC}, $M_{V}(t)$
is a martingale with respect to the filtration $\sigma\{\overline{H}_{t},U(\psi^{*})\}$.
By Proposition II.4.1 in \citet{andersen2012statistical}, $M_{V}(t)$
has an unique compensator $<M_{V}(t)>=\int_{0}^{t}\lambda_{V}(u\mid\overline{H}_{u})Y_{V}(u)\de u$.
If $g_{1}(\cdot)$ and $g_{2}(\cdot)$ are bounded $\sigma\{\overline{H}_{t},U(\psi^{*})\}$-predictable
processes, then 
\[
<\int_{0}^{t}g_{1}(u)\de M_{V}(u),\int_{0}^{t}g_{2}(u)\de M_{V}(u)>
\]
exists, and 
\begin{equation}
<\int_{0}^{t}g_{1}(u)\de M_{V}(u),\int_{0}^{t}g_{2}(u)\de M_{V}(u)>=\int_{0}^{t}g_{1}(u)g_{2}(u)\lambda_{V}(u\mid\overline{H}_{u})Y_{V}(u)\de u.\label{eq:martingale1}
\end{equation}

\end{lemma}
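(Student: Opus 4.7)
The plan is to address the three claims of the lemma in sequence, with the pivotal input being the identity $\lambda_V\{t \mid \overline{H}_t, U(\psi^*)\} = \lambda_V(t \mid \overline{H}_t)$ that the paper derives from Assumption \ref{asump:UNC} in equation (\ref{eq:UNC2}).

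For the martingale property, I would argue that this identity shows the $\sigma\{\overline{H}_t, U(\psi^*)\}$-intensity of the counting process $N_V$ coincides with its $\sigma(\overline{H}_t)$-intensity, both equal to $\lambda_V(t \mid \overline{H}_t) Y_V(t)$. Consequently, by the Doob--Meyer decomposition, $M_V(t) = N_V(t) - \int_0^t \lambda_V(u \mid \overline{H}_u) Y_V(u) \de u$ is a martingale with respect to the enlarged filtration $\sigma\{\overline{H}_t, U(\psi^*)\}$. Uniqueness of the compensator $\langle M_V \rangle$ then follows immediately from Proposition II.4.1 of Andersen et al.~(2012).

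For the inner product formula, I would invoke standard stochastic integration theory for counting process martingales. Because the compensator is absolutely continuous, the predictable quadratic variation of $M_V$ is $\langle M_V \rangle_t = \int_0^t \lambda_V(u \mid \overline{H}_u) Y_V(u) \de u$. For bounded $\sigma\{\overline{H}_t, U(\psi^*)\}$-predictable processes $g_1, g_2$, the stochastic integrals $\int_0^{\cdot} g_i \de M_V$ are square-integrable martingales, and their predictable covariation satisfies
\[
\Bigl\langle \int_0^{\cdot} g_1 \de M_V,\ \int_0^{\cdot} g_2 \de M_V \Bigr\rangle_t = \int_0^t g_1(u) g_2(u) \de \langle M_V \rangle_u.
\]
Taking expectations on both sides, and noting that the product of the two stochastic integrals minus their predictable covariation is itself a mean-zero martingale, yields the desired identity (\ref{eq:martingale1}). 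The boundedness of $g_1, g_2$ together with the integrability of $\langle M_V \rangle_t$ guarantees that the isometry applies and that $\left\langle \int g_1 \de M_V, \int g_2 \de M_V \right\rangle$ exists in $\mathcal{H}$.

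The main obstacle is the first part: carefully justifying the reduction from Assumption \ref{asump:UNC}, which conditions on the full data $(F, U)$, to the conditional intensity on the coarser $\sigma$-algebra generated by $\overline{H}_t$ and $U(\psi^*)$. The argument proceeds via iterated conditioning; since $U(\psi^*) = \int_0^T \exp[\{\psi_1^* + \psi_2^{*\T} g(L_u)\} A_u] \de u$ is $\sigma(F)$-measurable, the $\sigma\{\overline{H}_t, U(\psi^*)\}$-intensity can be recovered as a conditional expectation of the $\sigma\{F, U\}$-intensity, which by Assumption \ref{asump:UNC} equals $\lambda_V(t \mid \overline{H}_t)$ --- already $\sigma(\overline{H}_t)$-measurable and hence unaffected by further conditioning on $U(\psi^*)$. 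A subtle point is that the hazard is defined as a limit of conditional probabilities on $\{V \geq t\}$, so interchanging the limit with conditional expectation requires a standard uniform integrability argument for bounded counting processes, which can be dispatched by appealing to Theorem II.3.1 of Andersen et al.
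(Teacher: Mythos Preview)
The paper does not supply an explicit proof of this lemma; it merely states it, invoking Proposition~II.4.1 of Andersen et al.\ (2012) in the statement itself, having separately established the key identity~(\ref{eq:UNC2}). Your proposal correctly fills in the omitted details and follows precisely the route the paper implicitly relies on: use (\ref{eq:UNC2}) to show that the $\sigma\{\overline{H}_t,U(\psi^*)\}$-intensity of $N_V$ coincides with $\lambda_V(t\mid\overline{H}_t)Y_V(t)$, apply the Doob--Meyer decomposition and Proposition~II.4.1 for the compensator, and then use the standard bilinearity formula for the predictable covariation of stochastic integrals against a counting-process martingale.

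One small clarification: in the lemma the bracket $\langle\cdot,\cdot\rangle$ denotes the predictable covariation process, not the Hilbert-space inner product defined just above it (the paper overloads the notation). Consequently (\ref{eq:martingale1}) is a pathwise (almost-sure) identity, obtained directly from $\langle\int g_1\,\de M_V,\int g_2\,\de M_V\rangle_t=\int_0^t g_1 g_2\,\de\langle M_V\rangle_t$ for bounded predictable $g_1,g_2$. Your sentence about ``taking expectations on both sides'' is therefore unnecessary for (\ref{eq:martingale1}) itself and slightly blurs the point; the expectation step is what the paper does \emph{afterwards} when applying the lemma (e.g., in the proof of Theorem~\ref{projection}), using that the product of the integrals minus their predictable covariation is a mean-zero martingale.
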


\section{Proof of (\ref{eq:UNC2})}

To show (\ref{eq:UNC2}), it suffices to show that $\lambda_{V}\{t\mid\overline{H}_{t},U(\psi^{*})\}=\lambda_{V}(t\mid\overline{H}_{t},U)$.
We obtain 
\begin{eqnarray*}
 &  & \lambda_{V}\left(t\mid\overline{H}_{t},U\right)=\lim_{h\rightarrow0}h^{-1}P\left(t\leq V<t+h,\Gamma=1\mid V\geq t,\overline{H}_{t},U\right)\\
 & = & \lim_{h\rightarrow0}h^{-1}\frac{P\left(U\mid t\leq V<t+h,\Gamma=1,\overline{H}_{t}\right)P(t\leq V<t+h,\Gamma=1\mid V\geq t,\overline{H}_{t})}{P\left(U\mid V\geq t,\Gamma=1,\overline{H}_{t}\right)}\\
 & = & \lim_{h\rightarrow0}h^{-1}\frac{P\left\{ U(\psi^{*})\mid t\leq V<t+h,\Gamma=1,\overline{H}_{t}\right\} P(t\leq V<t+h,\Gamma=1\mid V\geq t,\overline{H}_{t})}{P\left\{ U(\psi^{*})\mid V\geq t,\Gamma=1,\overline{H}_{t}\right\} }\\
 & = & \lim_{h\rightarrow0}h^{-1}P\left\{ t\leq V<t+h,\Gamma=1\mid V\geq t,\overline{H}_{t},U(\psi^{*})\right\} \\
 & = & \lambda_{V}\{t\mid\overline{H}_{t},U(\psi^{*})\},
\end{eqnarray*}
where the second equality follows by the Bayes rule, and the third
equality follows by Model (\ref{eq:trt model}) which entails that
the distributions of $(U,\overline{H}_{t})$ and $\{U(\psi^{*}),\overline{H}_{t}\}$
are the same.

\section{Identification of $\psi\in\R^{p}$ under Assumption \ref{asump:UNC}
\label{sec:Identification}}

Under Assumption \ref{asump:UNC}, $M_{V}(t)=N_{V}(t)-\int_{0}^{t}\lambda_{V}(u\mid\overline{H}_{u})Y_{V}(u)\de u$
is a martingale with respect to the filtration $\sigma\{\overline{H}_{t},U(\psi^{*})\}$.
Then, for any $c(\overline{H}_{t})\in\mathbb{\R}^{p}$ and $t>0$,
\begin{equation}
E\left\{ c(\overline{H}_{t})U(\psi^{*})\de M_{V}(t)\right\} =0.\label{eq:eeforidentification}
\end{equation}
Suppose that (\ref{eq:eeforidentification}) holds for $\psi^{*1}$
and $\psi^{*2}$; i.e., for any $c(\overline{H}_{t})\in\mathbb{\R}^{p}$
and $t>0$, $E[c(\overline{H}_{t})\{U(\psi^{*1})-U(\psi^{*2})\}\de M_{V}(t)]=0.$
To reflect the dependence of $U(\psi^{*1})-U(\psi^{*2})$ on $\left(\overline{A}_{T},\overline{L}_{T}\right)$,
denote $\varphi\left(\overline{A}_{T},\overline{L}_{T}\right)=U(\psi^{*1})-U(\psi^{*2})=\int_{0}^{T}\exp[\{\psi_{1}^{*1}+\psi_{2}^{*1\T}g(L_{u})\}A_{u}]\de u-\int_{0}^{T}\exp[\{\psi_{1}^{*2}+\psi_{2}^{*2\T}g(L_{u})\}A_{u}]\de u$.
Then, for any $c(\overline{H}_{t})\in\mathbb{\R}^{p}$ and $t>0$,
we have $E\left\{ c(\overline{H}_{t})\varphi\left(\overline{A}_{T},\overline{L}_{T}\right)\de M_{V}(t)\right\} =0.$
This implies that $\varphi\left(\overline{A}_{T},\overline{L}_{T}\right)$
is independent of $M_{V}(t)$ conditional on $(\overline{H}_{t},V>t)$
for all $\overline{H}_{t}$ and $t>0$. Therefore, $\varphi\left(\overline{A}_{T},\overline{L}_{T}\right)$
must not depend on $\overline{A}_{T}$, and therefore $\psi^{*1}$
must equal $\psi^{*2}$. Consequently, $\psi^{*}$ is uniquely identified
from (\ref{eq:eeforidentification}).

\section{Proof of Theorem \ref{Thm:Ortho}}

To motivate the concept of the nuisance tangent space for a semiparametric
model, we first consider a parametric model $f(F;\psi,\theta)$, where
$\psi$ is a $p$-dimensional parameter of interest, and $\theta$
is an $q$-dimensional nuisance parameter. The score vectors of $\psi$
and $\theta$ are $S_{\psi}(F)=\partial\log f(F;\psi,\theta^{*})/\partial\psi$
and $S_{\theta}(F)=\partial\log f(F;\psi^{*},\theta)/\partial\theta$,
respectively, both evaluated at the true value $(\psi^{*},\theta^{*})$.
For this parametric model, the nuisance tangent space $\Lambda$ is
the linear space in $\mathcal{H}$ spanned by the nuisance score vector
$S_{\theta}(F)$. In a semiparametric model, the nuisance parameter
$\theta$ may be infinite-dimensional. The nuisance tangent space
$\Lambda$ is defined as the mean squared closure of the nuisance
tangent spaces under any parametric submodel. An important fact is
that the orthogonal complement of the nuisance tangent space $\Lambda^{\bot}$
contains the influence functions for regular asymptotically linear
estimators of $\psi$.

First, we characterize the semiparametric likelihood function based
on a single observable $F$. Because the transformation of $F$ to
$\{U(\psi^{*}),\overline{H}_{T}\}$ is one-to-one, the likelihood
function based on $F$ becomes
\begin{equation}
f_{F}\left(T,\overline{H}_{T}\right)=\left\{ \frac{\partial U(\psi^{*})}{\partial T}\right\} f_{\{U(\psi^{*}),\overline{H}_{T}\}}\{U(\psi^{*}),\overline{H}_{T}\},\label{eq:slik}
\end{equation}
where $\partial U(\psi^{*})/\partial T=\exp\left[A_{T}\{\psi_{1}^{*}+\psi_{2}^{*\T}g(L_{T})\}\right]$.
Let $v_{0}=0<v_{1}<\cdots<v_{M}$ be the observed times to treatment
discontinuation among the $n$ subjects. We further express (\ref{eq:slik})
as 
\begin{eqnarray}
f_{F}\left(T,\overline{H}_{T};\psi^{*},\theta\right) & = & \left\{ \frac{\partial U(\psi^{*})}{\partial T}\right\} f\left\{ U(\psi^{*});\theta_{1}\right\} \prod_{k=1}^{M}f\left\{ L_{v_{k}}\mid\overline{H}_{v_{k}-1},U(\psi^{*}),T>v_{k};\theta_{2}\right\} \nonumber \\
 &  & \times\prod_{v=v_{1}}^{v_{M}}f\left\{ A_{v_{k}}\mid\overline{H}_{v_{k}-1},U(\psi^{*}),T>v_{k};\theta_{3}\right\} \nonumber \\
 & = & \left\{ \frac{\partial U(\psi^{*})}{\partial T}\right\} f\left\{ U(\psi^{*});\theta_{1}\right\} \prod_{k=1}^{M}f\left\{ L_{v_{k}}\mid\overline{H}_{v_{k}-1},U(\psi^{*}),T>v_{k};\theta_{2}\right\} \nonumber \\
 &  & \times\prod_{v=v_{1}}^{v_{M}}f\left(A_{v_{k}}\mid\overline{H}_{v_{k}-1},T>v_{k};\theta_{3}\right),\label{eq:slik2}
\end{eqnarray}
where the second equality follows from Assumption \ref{asump:UNC}
and (\ref{eq:UNC2}), $f\left\{ U(\psi^{*})\right\} $, $f\left\{ L_{v_{k}}\mid\overline{H}_{v_{k}-1},U(\psi^{*}),T>v_{k}\right\} ,$
and $f\left(A_{v_{k}}\mid\overline{H}_{v_{k}-1},T>v_{k}\right)$ are
completely unspecified, and $\theta=(\theta_{1},\theta_{2},\theta_{3})$
is a vector of infinite-dimensional nuisance parameters.

Let $\Lambda_{k}$ be the nuisance tangent space for $\theta_{k},$
for $k=1,2,3$. We now characterize $\Lambda_{k}$. 

For the nuisance parameter $\theta_{1}$, $f\left\{ U(\psi^{*});\theta_{1}\right\} $
is a nonparametric model indexed by $\theta_{1}$, i.e., $f\left\{ U(\psi^{*});\theta_{1}\right\} $
is a non-negative function and satisfies $\int f\left(v;\theta_{1}\right)\de v=1$.
Following Section 4.4 of \citet{tsiatis2007semiparametric}, the tangent
space regarding $\theta_{1}$ is the set of all vector $s\left\{ U(\psi^{*})\right\} \in\mathbb{\R}^{p}$
with $E\left[s\left\{ U(\psi^{*})\right\} \right]=0$. Thus, the tangent
space of $\theta_{1}$ is 
\[
\Lambda_{1}=\left\{ s\left\{ U(\psi^{*})\right\} \in\mathbb{\R}^{p}:E\left[s\left\{ U(\psi^{*})\right\} \right]=0\right\} .
\]

For the nuisance parameter $\theta_{2}$, $\prod_{k=1}^{M}f\left\{ L_{v_{k}}\mid\overline{H}_{v_{k}-1},U(\psi^{*}),T>v_{k};\theta_{2}\right\} $
is a nonparametric model indexed by $\theta_{2}$. To obtain the nuisance
tangent space of $\theta_{2}$, following the same derivation as for
$\theta_{1}$, the score function of $\theta_{2}$ is of the form
$\sum_{k=1}^{M}S\left\{ L_{v_{k}},\overline{H}_{v_{k}-1},U(\psi^{*})\right\} $,
where $E[S\left\{ L_{v_{k}},\overline{H}_{v_{k}-1},U(\psi^{*})\right\} \mid\overline{H}_{v_{k}-1},U(\psi^{*}),T>v_{k}]=0$.
Thus, the tangent space of $\theta_{2}$ is
\[
\Lambda_{2}=\sum_{k=1}^{M}\left\{ S\left\{ L_{v_{k}},\overline{H}_{v_{k}-1},U(\psi^{*})\right\} \in\R^{p}:E\left[S\left\{ L_{v_{k}},\overline{H}_{v_{k}-1},U(\psi^{*})\right\} \mid\overline{H}_{v_{k}-1},U(\psi^{*}),T>v_{k}\right]=0\right\} .
\]

For the nuisance parameter $\theta_{3}$, $\prod_{k=1}^{M}f\left(A_{v_{k}}\mid\overline{L}_{v_{k}-1},\overline{A}_{v_{k}-1},T>v_{k};\theta_{3}\right)$
can be equivalently expressed as the likelihood based on the data
$(V,\Gamma,\overline{H}_{V})$ and the hazard function $\lambda_{V}(t\mid\overline{H}_{t})$:
\begin{multline*}
f_{(V,\Gamma,\overline{H}_{V})}(V,\Gamma,\overline{H}_{V})=\lambda_{V}(V\mid\overline{H}_{V})^{\Gamma}\exp\left\{ -\int_{0}^{V}\lambda_{V}(u\mid\overline{H}_{u})\de u\right\} \\
\times\left\{ f_{T\mid\overline{H}_{T}}(V\mid\overline{H}_{V})\right\} ^{1-\Gamma}\left\{ \int_{V}^{\infty}f_{T\mid\overline{H}_{T}}(u\mid\overline{H}_{u})\de u\right\} ^{\Gamma}.
\end{multline*}
Following \citet{tsiatis2007semiparametric}, the tangent space of
$\theta_{3}$ is 
\[
\Lambda_{3}=\left\{ \int h_{u}(\overline{H}_{u})\de M_{V}(u):\ h_{u}(\overline{H}_{u})\in\mathbb{\R}^{p}\right\} .
\]

Moreover, it is easy to show that $\Lambda_{1}$, $\Lambda_{2}$ and
$\Lambda_{3}$ are mutually orthogonal subspaces. Then, $\Lambda=\Lambda_{1}\oplus\Lambda_{2}\oplus\Lambda_{3}$,
where $\oplus$ denotes a direct sum.

Now, let 
\[
\Lambda_{3}^{*}=\left\{ \int h_{u}\{U(\psi^{*}),\overline{H}_{u}\}\de M_{V}(u):\ h_{u}\{U(\psi^{*}),\overline{H}_{u}\}\in\R^{p}\right\} .
\]
Because the tangent space $\Lambda_{1}\oplus\Lambda_{2}\oplus\Lambda_{3}^{*}$
is that for a nonparametric model; i.e., a model that allows for all
densities of $F$, and because the tangent space for a nonparametric
model is the entire Hilbert space, we obtain $\mathcal{H}=\Lambda_{1}\oplus\Lambda_{2}\oplus\Lambda_{3}^{*}.$
Because $\Lambda=\Lambda_{1}\oplus\Lambda_{2}\oplus\Lambda_{3}$,
this implies that $\Lambda_{3}\subset\Lambda_{3}^{*}$. Also, the
orthogonal complement $\Lambda^{\bot}$ must be orthogonal to $\Lambda_{1}\oplus\Lambda_{2}$,
so $\Lambda^{\bot}$ must belong to $\Lambda_{3}^{*}$ and be orthogonal
to $\Lambda_{3}$. This means that $\Lambda^{\bot}$ consists of all
elements of $\Lambda_{3}^{*}$ that are orthogonal to $\Lambda_{3}$.

To characterize $\Lambda^{\bot}$, for any $\int h_{u}\{U(\psi^{*}),\overline{H}_{u}\}\de M_{V}(u)\in\Lambda_{3}^{*}$,
we obtain its projection onto $\Lambda_{3}^{\bot}$. To find the projection,
we derive $h_{u}^{*}(\overline{H}_{u})$ so that 
\[
\left[\int h_{u}\{U(\psi^{*}),\overline{H}_{u}\}\de M_{V}(u)-\int h_{u}^{*}(\overline{H}_{u})\de M_{V}(u)\right]\in\Lambda_{3}^{\bot}.
\]
Therefore, we have 
\begin{equation}
E\left(\int\left[h_{u}\{U(\psi^{*}),\overline{H}_{u}\}-h_{u}^{*}(\overline{H}_{u})\right]\de M_{V}(u)\times\int h_{u}(\overline{H}_{u})\de M_{V}(u)\right)=0,\label{eq:eq1}
\end{equation}
for any $h_{u}(\overline{H}_{u})$. By Lemma \ref{lemma:1}, (\ref{eq:eq1})
becomes 
\begin{eqnarray*}
 &  & E\left(\int\left[h_{u}\{U(\psi^{*}),\overline{H}_{u}\}-h_{u}^{*}(\overline{H}_{u})\right]h_{u}(\overline{H}_{u})\lambda_{V}(u\mid\overline{H}_{u})Y_{V}(u)\de u\right)\\
 & = & E\left(\int E\left(\left[h_{u}\{U(\psi^{*}),\overline{H}_{u}\}-h_{u}^{*}(\overline{H}_{u})\right]Y_{V}(u)\mid\overline{H}_{u}\right)h_{u}(\overline{H}_{u})\lambda_{V}(u\mid\overline{H}_{u})\de u\right)=0
\end{eqnarray*}
for any $h_{u}(\overline{H}_{u})$. Because $h_{u}(\overline{H}_{u})$
is arbitrary, we must have 
\begin{equation}
E\left(\left[h_{u}\{U(\psi^{*}),\overline{H}_{u}\}-h_{u}^{*}(\overline{H}_{u})\right]Y_{V}(u)\mid\overline{H}_{u}\right)=0.\label{eq:eq2}
\end{equation}
Solving (\ref{eq:eq2}) for $h_{u}^{*}(\overline{H}_{u})$, we obtain
\[
E\left[h_{u}\{U(\psi^{*}),\overline{H}_{u}\}Y_{V}(u)\mid\overline{H}_{u}\right]=h_{u}^{*}(\overline{H}_{u})E\left\{ Y_{V}(u)\mid\overline{H}_{u}\right\} ,
\]
or
\[
h_{u}^{*}(\overline{H}_{u})=\frac{E\left[h_{u}\{U(\psi^{*}),\overline{H}_{u}\}Y_{V}(u)\mid\overline{H}_{u}\right]}{E\left\{ Y_{V}(u)\mid\overline{H}_{u}\right\} }=E\left[h_{u}\{U(\psi^{*}),\overline{H}_{u}\}\mid\overline{H}_{u},V\geq u\right].
\]
Therefore, the space orthogonal to the nuisance tangent space is given
by
\[
\Lambda^{\bot}=\left\{ \int\left(h_{u}\{U(\psi^{*}),\overline{H}_{u}\}-E\left[h_{u}\{U(\psi^{*}),\overline{H}_{u}\}\mid\overline{H}_{u},V\geq u\right]\right)\de M_{V}(u):\vphantom{\sum_{i}}h_{u}\{U(\psi^{*}),\overline{H}_{u}\}\in\mathbb{\R}^{p}\right\} .
\]

\section{The optimal form $c^{\opt}(\overline{H}_{u})$}

We obtain the optimal form of $c(\overline{H}_{u})$ by projecting
the score function $S_{\psi}(F)$ onto
\[
\Lambda_{0}^{\bot}=\left\{ G(\psi^{*};F,c)=\int_{0}^{\infty}c(\overline{H}_{u})\left[U(\psi^{*})-E\left\{ U(\psi^{*})\mid\overline{H}_{u},V\geq u\right\} \right]\de M_{V}(u):c(\overline{H}_{u})\in\mathbb{\R}^{p}\right\} .
\]

We first characterize the projection of any $B(F)\in\mathcal{H}$
onto $\Lambda_{0}^{\bot}$. For ease of notation, we may suppress
the dependence of $F$ of random variables if there is no ambiguity.

\begin{theorem}[Projection]\label{projection} For any $B=B(F)\in\mathcal{H}$,
the projection of $B$ onto $\Lambda_{0}^{\bot}$ is
\begin{eqnarray}
 &  & \prod\left(B\mid\Lambda_{0}^{\bot}\right)=\int\left[E\left\{ B\dot{U}_{u}(\psi^{*})\mid\overline{H}_{u},V=u\right\} -E\left\{ B\dot{U}_{u}(\psi^{*})\mid\overline{H}_{u},V\geq u\right\} \right]\nonumber \\
 &  & \times\left[\mathrm{var}\left\{ U(\psi^{*})\mid\overline{H}_{u},V\geq u\right\} \right]^{-1}\left[U(\psi^{*})-E\left\{ U(\psi^{*})\mid\overline{H}_{u},V\geq u\right\} \right]\de M_{V}(u),\label{eq:projection}
\end{eqnarray}
where $\dot{U}_{u}(\psi)=U(\psi)-E\{U(\psi)\mid\overline{H}_{u},V\geq u\}$.

\end{theorem}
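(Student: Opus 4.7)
The plan is to characterize the projection by the orthogonality criterion: write $\Pi(B\mid\Lambda_0^\bot) = G(\psi^*;F,c^*)$ for an unknown function $c^*(\overline{H}_u)$, and determine $c^*$ from the requirement that
\[
E[\,B\cdot G(\psi^*;F,c)\,] \;=\; E[\,G(\psi^*;F,c^*)\cdot G(\psi^*;F,c)\,]\quad\text{for every }c(\overline{H}_u)\in\R^p.
\]
The right-hand side will be handled by Lemma \ref{lemma:1}; the left-hand side by a $dN_V$/compensator decomposition of $dM_V$. The final step will be to match integrands pointwise in $u$ and $\overline{H}_u$, using the arbitrariness of $c$.

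First I would compute the inner product on the right. Since $c(\overline{H}_u)\dot U_u(\psi^*)$ is predictable with respect to $\sigma\{U(\psi^*),\overline{H}_u\}$ (using NUC so that $M_V$ is a martingale on this larger filtration), Lemma \ref{lemma:1} gives
\[
E[G(c^*)G(c)] \;=\; E\!\left[\int c^*(\overline{H}_u)\,c(\overline{H}_u)\,\dot U_u(\psi^*)^2\,\lambda_V(u\mid\overline{H}_u)Y_V(u)\,\de u\right].
\]
Conditioning on $(\overline{H}_u,V\ge u)$ replaces $\dot U_u(\psi^*)^2$ by $\mathrm{var}\{U(\psi^*)\mid\overline{H}_u,V\ge u\}$, producing the factor that will ultimately invert to give the variance term in \eqref{eq:projection}.

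Next I would compute $E[B\,G(c)]$ by writing $dM_V(u)=dN_V(u)-\lambda_V(u\mid\overline{H}_u)Y_V(u)\,\de u$. For the compensator piece, iterated expectation conditional on $(\overline{H}_u,V\ge u)$ (together with $Y_V(u)$-measurability w.r.t.\ $\sigma(\overline{H}_u)$) yields
\[
E\!\left[B\!\int c(\overline{H}_u)\dot U_u(\psi^*)\lambda_V Y_V\,\de u\right] = E\!\left[\int c(\overline{H}_u)E\{B\dot U_u(\psi^*)\mid\overline{H}_u,V\ge u\}\lambda_V Y_V\,\de u\right].
\]
For the counting-process piece, the jump of $N_V$ is at $u=V$ (on $\Gamma=1$), so $\int g\,dN_V=g(V)\Gamma$; applying the compensator identity to the predictable process $c(\overline{H}_u)E\{B\dot U_u(\psi^*)\mid\overline{H}_u,V=u\}$ gives
\[
E\!\left[B\!\int c(\overline{H}_u)\dot U_u(\psi^*)\,dN_V(u)\right] = E\!\left[\int c(\overline{H}_u)E\{B\dot U_u(\psi^*)\mid\overline{H}_u,V=u\}\lambda_V Y_V\,\de u\right].
\]
Subtracting, $E[B G(c)]$ equals $E[\int c(\overline{H}_u)\{E[B\dot U_u(\psi^*)\mid\overline{H}_u,V=u]-E[B\dot U_u(\psi^*)\mid\overline{H}_u,V\ge u]\}\lambda_V Y_V\,\de u]$.

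Finally, equating the two expressions and exploiting arbitrariness of $c(\overline{H}_u)$ forces the two integrands (multiplying $c$ against $\lambda_V Y_V$) to coincide $P$-a.s. on $\{V\ge u\}$, which solves directly for $c^*(\overline{H}_u) = \{E[B\dot U_u(\psi^*)\mid\overline{H}_u,V=u]-E[B\dot U_u(\psi^*)\mid\overline{H}_u,V\ge u]\}[\mathrm{var}\{U(\psi^*)\mid\overline{H}_u,V\ge u\}]^{-1}$. Plugging this $c^*$ into $G(\psi^*;F,c^*)=\int c^*(\overline{H}_u)\dot U_u(\psi^*)\,dM_V(u)$ recovers exactly \eqref{eq:projection}. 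The main obstacle is the $dN_V$ computation: one must justify replacing $B$ inside the stochastic integral by $E\{B\,\dot U_u(\psi^*)\mid\overline{H}_u,V=u\}$, which is really an application of the compensator identity to a predictable integrand obtained by optional projection, and it is here that the no-unmeasured-confounding reduction \eqref{eq:UNC2} is indispensable (so that the compensator is expressible in terms of $\lambda_V(u\mid\overline{H}_u)$ only).
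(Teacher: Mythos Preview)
Your proposal is correct and follows essentially the same route as the paper: both verify the orthogonality criterion $E\{B\,\widetilde G\}=E\{G\,\widetilde G\}$ for all $\widetilde G\in\Lambda_0^\bot$, computing the right side via Lemma~\ref{lemma:1} (which produces the variance factor) and the left side by splitting $\de M_V=\de N_V-\lambda_V Y_V\,\de u$ and conditioning on $\{\overline H_u,V=u\}$ and $\{\overline H_u,V\ge u\}$ respectively. The only cosmetic difference is that the paper starts from the candidate $G$ and checks orthogonality directly, whereas you solve for $c^*$ from the matched integrands; these are equivalent formulations of the same argument.
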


\textit{Proof. }Let $G(F)$ be the quantity in the right hand side
of (\ref{eq:projection}). To show that $\prod\left(B\mid\Lambda_{0}^{\bot}\right)=G(F)$,
we must show that $B-G\in\Lambda_{0}$. Toward that end, we show that
for any $\widetilde{G}(F)\in\Lambda_{0}^{\bot}$, $(B-G)\indep\widetilde{G}$.
Specifically, we need to show that for any $\widetilde{G}(F)=\int_{0}^{\infty}\widetilde{c}(\overline{H}_{u})\left[U(\psi^{*})-E\left\{ U(\psi^{*})\mid\overline{H}_{u},V\geq u\right\} \right]\de M_{V}(u)$,
$E\left\{ (B-G)\widetilde{G}\right\} =0$. We now verify that $E\left(B\widetilde{G}\right)=E\left(G\widetilde{G}\right)$
by the following calculation.

Firstly, we obtain
\begin{eqnarray}
 &  & E\left(G\widetilde{G}\right)=E\left(<G,\widetilde{G}>\right)\nonumber \\
 & = & E\int\widetilde{c}(\overline{H}_{u})\left[E\left\{ BU(\psi^{*})\mid\overline{H}_{u},V=u\right\} -E\left\{ BU(\psi^{*})\mid\overline{H}_{u},V\geq u\right\} \right]\label{eq:right}\\
 &  & \times\left[\var\left\{ U(\psi^{*})\mid\overline{H}_{u},V\geq u\right\} \right]^{-1}\left[U(\psi^{*})-E\left\{ U(\psi^{*})\mid\overline{H}_{u},V\geq u\right\} \right]^{2}\lambda_{V}(u\mid\overline{H}_{u})Y_{V}(u)\de u\nonumber \\
 & = & E\int\widetilde{c}(\overline{H}_{u})\left[E\left\{ BU(\psi^{*})\mid\overline{H}_{u},V=u\right\} -E\left\{ BU(\psi^{*})\mid\overline{H}_{u},V\geq u\right\} \right]\lambda_{V}(u\mid\overline{H}_{u})Y_{V}(u)\de u.\nonumber 
\end{eqnarray}

Secondly, we obtain
\begin{eqnarray}
 &  & E\left(B\widetilde{G}\right)=E\int\widetilde{c}(\overline{H}_{u})B\left[U(\psi^{*})-E\left\{ U(\psi^{*})\mid\overline{H}_{u},T\geq u\right\} \right]\de M_{V}(u)\nonumber \\
 & = & E\int\widetilde{c}(\overline{H}_{u})B\dot{U}_{u}(\psi^{*})\de N_{V}(u)-E\int_{0}^{\infty}\widetilde{c}(\bar{V}_{u})B\dot{U}_{u}(\psi^{*})\lambda_{V}(u\mid\overline{H}_{u})Y_{V}(u)\de u\label{eq:left}\\
 & = & E\int\widetilde{c}(\overline{H}_{u})\left[E\left\{ B\dot{U}_{u}(\psi^{*})\mid\overline{H}_{u},V=u\right\} -E\left\{ B\dot{U}_{u}(\psi^{*})\mid\overline{H}_{u},V\geq u\right\} \right]\lambda_{V}(u\mid\overline{H}_{u})Y_{V}(u)\de u,\nonumber 
\end{eqnarray}
where the last equality follows because
\begin{eqnarray*}
E\int\widetilde{c}(\overline{H}_{u})B\dot{U}_{u}(\psi^{*})\de N_{V}(u) & = & E\int\widetilde{c}(\overline{H}_{u})E\left\{ B\dot{U}_{u}(\psi^{*})\de N_{V}(u)\mid\overline{H}_{u},V\geq u\right\} \\
 & = & E\int\widetilde{c}(\overline{H}_{u})E\left\{ B\dot{U}_{u}(\psi^{*})I\left(u\leq V\leq u+\de u,\Gamma=1\right)\mid\overline{H}_{u},V\geq u\right\} \\
 & = & E\int\widetilde{c}(\overline{H}_{u})E\left\{ B\dot{U}_{u}(\psi^{*})\mid\overline{H}_{u},V=u\right\} \lambda_{V}(u\mid\overline{H}_{u})Y_{V}(u)\de u,
\end{eqnarray*}
and
\[
E\int\widetilde{c}(\overline{H}_{u})B\dot{U}_{u}(\psi^{*})\lambda_{V}(u\mid\overline{H}_{u})Y_{V}(u)\de u=E\int\widetilde{c}(\bar{V}_{u})E\left\{ B\dot{U}_{u}(\psi^{*})\mid\overline{H}_{u},V\geq u\right\} \lambda_{V}(u\mid\overline{H}_{u})Y_{V}(u)\de u.
\]
Therefore, by (\ref{eq:right}) and (\ref{eq:left}), $E\left(B\widetilde{G}\right)=E\left(G\widetilde{G}\right)$
for any $\widetilde{G}\in\Lambda_{0}^{\bot}$, proving (\ref{eq:projection}).

\begin{theorem}\label{Thm: efficient score}The optimal form of $c(\overline{H}_{u})$
is (\ref{eq:c-opt}) in the sense that with this form the solution
to (\ref{eq:ee4}) gives the most precise estimator of $\psi^{*}$
among all the solutions to (\ref{eq:ee4}).

\end{theorem}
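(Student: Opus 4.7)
The plan is to recast the optimality claim as a matrix-variance minimization. By standard $Z$-estimator theory, the estimator $\widehat{\psi}(c)$ solving $P_{n}G(\psi;F)=0$ satisfies $n^{1/2}\{\widehat{\psi}(c)-\psi^{*}\}\to\N\{0,A(c)^{-1}B(c)A(c)^{-\T}\}$, where $A(c)=E\{\partial G(\psi^{*};F)/\partial\psi^{\T}\}$ and $B(c)=E\{GG^{\T}\}$. I would evaluate both using the martingale calculus already developed for Theorem \ref{projection}. Writing $\partial G/\partial\psi^{\T}=\int c(\overline{H}_{u})\{\partial\dot{U}_{u}(\psi^{*})/\partial\psi^{\T}\}\de M_{V}(u)$ and splitting $\de M_{V}=\de N_{V}-\lambda_{V}Y_{V}\de u$, the compensator contribution vanishes because differentiating the identity $E\{\dot{U}_{u}(\psi)\mid\overline{H}_{u},V\geq u\}\equiv 0$ in $\psi$ yields $E\{\partial\dot{U}_{u}(\psi^{*})/\partial\psi\mid\overline{H}_{u},V\geq u\}=0$; the counting-process piece collapses via the same $\de N_{V}$ conditioning as in the proof of Theorem \ref{projection} to
\begin{equation*}
A(c)=E\int c(\overline{H}_{u})D_{u}^{\T}\lambda_{V}(u\mid\overline{H}_{u})Y_{V}(u)\de u, \qquad D_{u}=E\{\partial\dot{U}_{u}(\psi^{*})/\partial\psi\mid\overline{H}_{u},V=u\}.
\end{equation*}
For $B(c)$, predictability of $c(\overline{H}_{u})\dot{U}_{u}(\psi^{*})$ with respect to $\sigma\{U(\psi^{*}),\overline{H}_{u}\}$ permits componentwise application of Lemma \ref{lemma:1}, and conditioning on $(\overline{H}_{u},V\geq u)$ converts $\dot{U}_{u}(\psi^{*})^{2}$ into $\Sigma_{u}=\mathrm{var}\{U(\psi^{*})\mid\overline{H}_{u},V\geq u\}$, giving $B(c)=E\int c(\overline{H}_{u})\Sigma_{u}c(\overline{H}_{u})^{\T}\lambda_{V}(u\mid\overline{H}_{u})Y_{V}(u)\de u$.

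Substituting $c^{\opt}(\overline{H}_{u})=D_{u}\Sigma_{u}^{-1}$ from (\ref{eq:c-opt}) into these formulas, I read off $A(c^{\opt})=B(c^{\opt})=E\int D_{u}\Sigma_{u}^{-1}D_{u}^{\T}\lambda_{V}(u\mid\overline{H}_{u})Y_{V}(u)\de u$, so the asymptotic variance under $c^{\opt}$ collapses to $A(c^{\opt})^{-1}$. The claim thus reduces to the Loewner-order inequality $A(c)^{-1}B(c)A(c)^{-\T}\geq A(c^{\opt})^{-1}$, equivalently $A(c^{\opt})\geq A(c)^{\T}B(c)^{-1}A(c)$ for every admissible $c$. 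By the Schur complement criterion, this is equivalent to positive semi-definiteness of the $2p\times 2p$ block matrix $\left(\begin{smallmatrix}A(c^{\opt}) & A(c)^{\T}\\A(c) & B(c)\end{smallmatrix}\right)$. For any $(a^{\T},b^{\T})\in\R^{2p}$, using that $D_{u}^{\T}a$ and $c(\overline{H}_{u})^{\T}b$ are scalars, the associated quadratic form simplifies by completing the square to
\begin{equation*}
E\int\Sigma_{u}^{-1}\bigl\{D_{u}^{\T}a+\Sigma_{u}\,c(\overline{H}_{u})^{\T}b\bigr\}^{2}\lambda_{V}(u\mid\overline{H}_{u})Y_{V}(u)\de u \geq 0,
\end{equation*}
which delivers the required inequality; equality holds precisely when $c(\overline{H}_{u})\propto c^{\opt}(\overline{H}_{u})$ on the support of the at-risk martingale measure, confirming the optimality of $c^{\opt}$ within the class.

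The main obstacle I anticipate is the derivation of $A(c)$: carefully separating the $\de N_{V}$ and compensator contributions via the conditioning device used in the proof of Theorem \ref{projection}, and justifying that differentiation in $\psi$ may be interchanged with the conditional expectation in $E\{\dot{U}_{u}(\psi)\mid\overline{H}_{u},V\geq u\}\equiv 0$ to kill the compensator term. Once these algebraic identities are in place, the Schur-complement completion-of-squares step is routine and mirrors the classical Gauss--Markov optimality argument in weighted least squares.
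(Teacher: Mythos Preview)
Your proposal is correct but proceeds along a different line from the paper. The paper establishes optimality by appealing to semiparametric projection theory: the best estimating function in the restricted class $\Lambda_{0}^{\bot}$ is $\prod(S_{\psi}\mid\Lambda_{0}^{\bot})$, and Theorem \ref{projection} provides the projection formula. The paper then converts the resulting expression, which involves $E\{S_{\psi}\dot{U}_{u}(\psi^{*})\mid\overline{H}_{u},V=u\}$ and $E\{S_{\psi}\dot{U}_{u}(\psi^{*})\mid\overline{H}_{u},V\geq u\}$, into the stated $c^{\opt}$ via the generalized information equality (differentiating $E\{\dot{U}_{u}(\psi)\mid\overline{H}_{u},V=u\}=0$ and $E\{\dot{U}_{u}(\psi)\mid\overline{H}_{u},V\geq u\}=0$ in $\psi$). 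The claim that the projection yields the minimum-variance member is taken as a standard fact from semiparametric theory rather than proved directly.

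Your route is instead a direct sandwich-variance computation followed by a Gauss--Markov/Cauchy--Schwarz argument: you compute $A(c)$ and $B(c)$ explicitly from the martingale calculus, observe $A(c^{\opt})=B(c^{\opt})$, and then establish the Loewner inequality $A(c)^{\T}B(c)^{-1}A(c)\leq A(c^{\opt})$ by completing the square inside the integral. This is more elementary and fully self-contained---it does not invoke the projection-is-optimal principle---and makes the optimality statement concrete at the level of asymptotic variance matrices. The paper's approach, on the other hand, situates $c^{\opt}$ within the Hilbert-space geometry of $\Lambda_{0}^{\bot}$, which clarifies its relationship to the full efficient score and to the projection machinery already built in Theorem \ref{projection}. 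Both proofs rely on the same identity $E\{\partial\dot{U}_{u}(\psi^{*})/\partial\psi\mid\overline{H}_{u},V\geq u\}=0$, which you isolate when killing the compensator term and the paper uses in simplifying the projection.
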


\textit{Proof.} We write $G(\psi^{*};F,c)$ to emphasize its dependence
on $c(\overline{H}_{u})$. We derive the optimal form of $c(\overline{H}_{u})$
by deriving the most efficient $G(\psi^{*};F,c)$ in $\Lambda_{0}^{\bot}$,
which is $G(\psi^{*};F,c^{\opt})=\prod\left(S_{\psi}\mid\Lambda_{0}^{\bot}\right)$.

By Theorem \ref{projection}, we have
\begin{multline}
G(\psi^{*};F,c^{\opt})=\int\left[E\left\{ S_{\psi}\dot{U}_{u}(\psi^{*})\mid\overline{H}_{u},V=u\right\} -E\left\{ S_{\psi}\dot{U}_{u}(\psi^{*})\mid\overline{H}_{u},V\geq u\right\} \right]\\
\times\left[\mathrm{var}\left\{ U(\psi^{*})\mid\overline{H}_{u},V\geq u\right\} \right]^{-1}\left[U(\psi^{*})-E\left\{ U(\psi^{*})\mid\overline{H}_{u},V\geq u\right\} \right]\de M_{V}(u).\label{eq:G-opt0}
\end{multline}
Because $E\{\dot{U}_{u}(\psi)\mid\overline{H}_{u},V\geq u\}=0$, taking
the derivative of $\psi$ at both sides and using the generalized
information equality, we have $E\{S_{\psi}\dot{U}_{u}(\psi)\mid\overline{H}_{u},V\geq u\}+E\{\partial\dot{U}_{u}(\psi)/\partial\psi\mid\overline{H}_{u},V\geq u\}=0$,
or equivalently $E\{S_{\psi}\dot{U}_{u}(\psi)\mid\overline{H}_{u},V\geq u\}=-E\{\partial\dot{U}_{u}(\psi)/\partial\psi\mid\overline{H}_{u},V\geq u\}$.
Similarly, because $E\{\dot{U}_{u}(\psi)\mid\overline{H}_{u},V=u\}=0$,
we have $E\{S_{\psi}\dot{U}_{u}(\psi)\mid\overline{H}_{u},V=u\}+E\{\partial\dot{U}_{u}(\psi)/\partial\psi\mid\overline{H}_{u},V=u\}=0$,
or equivalently $E\{S_{\psi}\dot{U}_{u}(\psi)\mid\overline{H}_{u},V=u\}=-E\{\partial\dot{U}_{u}(\psi)/\partial\psi\mid\overline{H}_{u},V=u\}$.
Continuing (\ref{eq:G-opt0}),
\begin{eqnarray}
G(\psi^{*};F,c^{\opt}) & = & -\int_{0}^{\infty}\left[E\left\{ \partial\dot{U}_{u}(\psi^{*})/\partial\psi\mid\overline{H}_{u},V=u\right\} -E\left\{ \partial\dot{U}_{u}(\psi^{*})/\partial\psi\mid\overline{H}_{u},V\geq u\right\} \right]\nonumber \\
 &  & \times\left[\mathrm{var}\left\{ U(\psi^{*})\mid\overline{H}_{u},V\geq u\right\} \right]^{-1}\left[U(\psi^{*})-E\left\{ U(\psi^{*})\mid\overline{H}_{u},V\geq u\right\} \right]\de M_{V}(u)\nonumber \\
 & = & -\int_{0}^{\infty}E\left\{ \partial\dot{U}_{u}(\psi^{*})/\partial\psi\mid\overline{H}_{u},V=u\right\} \left[\mathrm{var}\left\{ U(\psi^{*})\mid\overline{H}_{u},V\geq u\right\} \right]^{-1}\nonumber \\
 &  & \times\left[U(\psi^{*})-E\left\{ U(\psi^{*})\mid\overline{H}_{u},V\geq u\right\} \right]\de M_{V}(u).\label{eq:G-opt}
\end{eqnarray}
Therefore, by (\ref{eq:G-opt}), ignoring the negative sign, $c^{\opt}(\overline{H}_{u})$
is given by (\ref{eq:c-opt}).

\section{Proof of Theorem \ref{Thm:2-dr}}

We show that $E\{G(\psi^{*};F,c)\}=0$ in two cases.

First, if $\lambda_{V}(t\mid\overline{H}_{t})$ is correctly specified,
under Assumption \ref{asump:UNC}, $M_{V}(t)$ is a martingale with
respect to the filtration $\sigma\{\overline{H}_{t},U(\psi^{*})\}$.
Because $c(\overline{H}_{u})\left[U(\psi^{*})-E\left\{ U(\psi^{*})\mid\overline{H}_{u},V\geq u\right\} \right]$
is a $\sigma\{\overline{H}_{t},U(\psi^{*})\}$-predictable process,
$\int_{0}^{t}c(\overline{H}_{u})\left[U(\psi^{*})-E\left\{ U(\psi^{*})\mid\overline{H}_{u},V\geq u\right\} \right]\de M_{V}(u)$
is a martingale for $t\geq0$. Therefore, $E\{G(\psi^{*};F,c)\}=0$.

Second, if $E\left\{ U(\psi^{*})\mid\overline{H}_{u},V\geq u\right\} $
is correctly specified but $\lambda_{V}(t\mid\overline{H}_{t})$ is
not necessarily correctly specified, let $\lambda_{V}^{*}(t\mid\overline{H}_{t})$
be the probability limit of the possibly misspecified model. We obtain
\begin{eqnarray}
 &  & E\int c(\overline{H}_{u})\left[U(\psi^{*})-E\left\{ U(\psi^{*})\mid\overline{H}_{u},V\geq u;\xi^{*}\right\} \right]\left\{ \de N_{V}(u)-\lambda_{V}^{*}(u\mid\overline{H}_{u})Y_{V}(u)\de u\right\} \nonumber \\
 & = & E\int c(\overline{H}_{u})\left[U(\psi^{*})-E\left\{ U(\psi^{*})\mid\overline{H}_{u},V\geq u;\xi^{*}\right\} \right]\left\{ \de N_{V}(u)-\lambda_{V}(u\mid\overline{H}_{u})Y_{V}(u)\de u\right\} \nonumber \\
 &  & +E\int c(\overline{H}_{u})\left[U(\psi^{*})-E\left\{ U(\psi^{*})\mid\overline{H}_{u},V\geq u;\xi^{*}\right\} \right]\left\{ \lambda_{V}(u\mid\overline{H}_{u})-\lambda_{V}^{*}(u\mid\overline{H}_{u})\right\} Y_{V}(u)\de u\nonumber \\
 & = & 0+E\int c(\overline{H}_{u})E\left(\left[U(\psi^{*})-E\left\{ U(\psi^{*})\mid\overline{H}_{u},V\geq u;\xi^{*}\right\} \right]\mid\overline{H}_{u},V\geq u\right)\label{eq:eq3}\\
 &  & \times\left\{ \lambda_{V}(u\mid\overline{H}_{u})-\lambda_{V}^{*}(u\mid\overline{H}_{u})\right\} Y_{V}(u)\de u\nonumber \\
 & = & 0+E\int c(\overline{H}_{u})\times0\times\left\{ \lambda_{V}(u\mid\overline{H}_{u})-\lambda_{V}^{*}(u\mid\overline{H}_{u})\right\} Y_{V}(u)\de u\label{eq:eq4}\\
 & = & 0,\nonumber 
\end{eqnarray}
where zero in (\ref{eq:eq3}) follows because $\de M_{V}(u)=\de N_{V}(u)-\lambda_{V}(u\mid\overline{H}_{u})\de u$
is a martingale with respect to the filtration $\sigma\{\overline{H}_{t},U(\psi^{*})\}$,
and zero in (\ref{eq:eq4}) follows because $E\left\{ U(\psi^{*})\mid\overline{H}_{u},V\geq u\right\} $
is correctly specified and therefore, $E\left\{ U(\psi^{*})\mid\overline{H}_{u},V\geq u;\xi^{*}\right\} =E\left\{ U(\psi^{*})\mid\overline{H}_{u},V\geq u\right\} $.

\section{Proof that $\widetilde{U}(\psi^{*})$ and $\Delta(\psi^{*})$ are
computable}

If $T\leq C$, because $U(\psi^{*})$ and $C(\psi^{*})$ are observable,
$\widetilde{U}(\psi^{*})$ and $\Delta(\psi^{*})$ are computable.
If $C<T$, $U(\psi^{*})$ is not computable; however, in this case,
we shall show that $C(\psi^{*})<U(\psi^{*})$ corresponding to $\widetilde{U}(\psi^{*})=C(\psi^{*})$
and $\Delta(\psi^{*})=0$, which are computable. Toward this end,
by definition of $C(\psi^{*})$, we show that when $C<T$, it is always
the case that $C(\psi^{*})\leq U(\psi^{*})$. If $\psi^{*}\geq0$,
$C(\psi^{*})=C\leq T\leq\int_{0}^{T}\exp(\psi^{*}A_{u})\de u=U(\psi^{*})$.
If $\psi^{*}<0$, $C(\psi^{*})=C\exp(\psi^{*})\leq T\exp(\psi^{*})=\int_{0}^{T}\exp(\psi^{*})\de u\leq\int_{0}^{T}\exp(\psi^{*}A_{u})\de u=U(\psi^{*})$.
This completes the proof.

\section{Proof of $\lambda_{V}(t\mid\overline{H}_{t})=\lambda_{V}(t\mid\overline{H}_{t},C\protect\geq t)$}

First, by Assumption \ref{asump:UNC}, we obtain
\begin{eqnarray*}
P(C\ge t\mid t\leq V<t+h,\Gamma=1,\overline{H}_{t}) & = & \exp\left\{ \int_{0}^{t}-\lambda_{C}(u\mid t\leq V<t+h,\Gamma=1,\overline{H}_{t})\de u\right\} \\
 & = & \exp\left\{ \int_{0}^{t}-\lambda_{C}(u\mid\overline{H}_{u})\de u\right\} ,
\end{eqnarray*}
and similarly, we obtain
\begin{eqnarray*}
P(C\ge t\mid V\ge t,\Gamma=1,\overline{H}_{t}) & = & \exp\left\{ \int_{0}^{t}-\lambda_{C}(u\mid V\ge t,\Gamma=1,\overline{H}_{t})\de u\right\} \\
 & = & \exp\left\{ \int_{0}^{t}-\lambda_{C}(u\mid\overline{H}_{u})\de u\right\} .
\end{eqnarray*}
Consequently, $P(C\ge t\mid t\leq V<t+h,\Gamma=1,\overline{H}_{t})=P(C\ge t\mid V\ge t,\Gamma=1,\overline{H}_{t})$.

Now, by the Bayes rule, we express
\begin{multline*}
\lambda_{V}(t\mid\overline{H}_{t},C\geq t)=\lim_{h\rightarrow0}h^{-1}P(t\leq V<t+h,\Gamma=1\mid V\geq t,\overline{H}_{t},C\ge t)
\end{multline*}
\begin{eqnarray*}
 & = & \lim_{h\rightarrow0}h^{-1}\frac{P(t\leq V<t+h,\Gamma=1\mid V\geq t,\overline{H}_{t})P(C\ge t\mid t\leq V<t+h,\Gamma=1,\overline{H}_{t})}{P(C\ge t\mid V\ge t,\overline{H}_{t})}\\
 & = & \lim_{h\rightarrow0}h^{-1}P(t\leq V<t+h,\Gamma=1\mid V\geq t,\overline{H}_{t})=\lambda_{V}(t\mid\overline{H}_{t}).
\end{eqnarray*}

\section{Identification of $\psi\in\R^{p}$ under Assumptions \ref{asump:UNC},\textendash \ref{asp:positivity}}

Under Assumptions \ref{asp:NUC-1}, and \ref{asp:positivity}, for
any $c(\overline{H}_{t})\in\mathbb{\R}^{p}$ and $t>0$, 
\begin{equation}
E\left\{ \frac{\Delta}{K_{C}\left(T\mid\overline{H}_{T}\right)}c(\overline{H}_{t})U(\psi^{*})\de M_{V}(t)\right\} =E\left\{ c(\overline{H}_{t})U(\psi^{*})\de M_{V}(t)\right\} =0.\label{eq:eeforidentification-1}
\end{equation}
Because under Assumption \ref{asump:UNC}, $\psi^{*}$ is uniquely
identified from (\ref{eq:eeforidentification}). Therefore, under
Assumptions \ref{asump:UNC}\textendash \ref{asp:positivity}, $\psi^{*}$
is uniquely identified from (\ref{eq:eeforidentification-1}).

\section{Proof of Theorem \ref{Thm:ipcw} \label{sec:Proof-of-ipcw}}

To show (\ref{eq:IPCW}) is an unbiased estimating equation, it suffices
to show that
\[
E\left\{ \frac{\Delta}{K_{C}\left(T\mid\overline{H}_{T}\right)}G(\psi^{*};F)\right\} =0.
\]
Toward that end, by the iterative expectation, we have
\begin{eqnarray*}
E\left\{ \frac{\Delta}{K_{C}\left(T\mid\overline{H}_{T}\right)}G(\psi^{*};F)\right\}  & = & E\left[E\left\{ \frac{\Delta}{K_{C}\left(T\mid\overline{H}_{T}\right)}G(\psi^{*};F)\mid F\right\} \right]\\
 & = & E\left\{ \frac{E(\Delta\mid F)}{K_{C}\left(T\mid\overline{H}_{T}\right)}G(\psi^{*};F)\right\} \\
 & = & E\left\{ 1\times G(\psi^{*};F)\right\} =0,
\end{eqnarray*}
where the third equality follows by the dependent censoring mechanism
specified in (\ref{eq:censoring}).

\section{The asymptotic properties of the proposed estimator}

To establish the asymptotic properties of the proposed estimator,
we first introduce additional notation.

Recall the nuisance models (i) $E\{U(\psi^{*})\mid\overline{H}_{u},V\geq u;\xi\}$
indexed by $\xi$; (ii) the proportional hazards model for the treatment
process (\ref{eq:ph-V}), indexed by $M_{V}$; and (iii) the proportional
hazards model for the censoring process (\ref{eq:ph-C}), indexed
by $K_{C}$. $\widehat{\xi}$, $\widehat{M}_{V}$, and $\widehat{K}_{C}$
are the estimates of $\xi$, $M_{V}$, and $K_{C}$ under the specified
parametric and semiparametric models. The probability limits of $\widehat{\xi}$,
$\widehat{M}_{V}$, and $\widehat{K}_{C}$ are $\xi^{*}$, $M_{V}^{*}$,
and $K_{C}^{*}$. If the failure time model is correctly specified,
$E\{U(\psi^{*})\mid\overline{H}_{u},V\geq u;\xi^{*}\}=E\{U(\psi^{*})\mid\overline{H}_{u},V\geq u\}$;
if the model for the treatment process is correctly specified, $M_{V}^{*}=M_{V}$;
and if the model for the censoring process is correctly specified,
$K_{C}^{*}=K_{C}$. 

To reflect that the estimating function depends on the nuisance parameters,
we define
\begin{eqnarray*}
\Phi(\psi,\xi,M_{V},K_{C};F) & = & \frac{\Delta G(\psi,\xi,M_{V};F)}{K_{C}\left(T\mid\overline{H}_{T}\right)},\\
G(\psi,\xi,M_{V};F) & = & \int c(\overline{H}_{u})\left[U(\psi)-E\left\{ U(\psi)\mid\overline{H}_{u},V\geq u;\xi\right\} \right]\de M_{V}(t).
\end{eqnarray*}
Let $P$ denote the true data generating distribution, and for any
$f(F)$, let $P\{f(F)\}=\int f(x)\de P(x)$. We define
\begin{eqnarray*}
J_{1}(\xi) & = & P\left\{ \Phi(\psi^{*},\xi,M_{V}^{*},K_{C}^{*};F)\right\} ,\\
J_{2}(M_{V}) & = & P\left\{ \Phi(\psi^{*},\xi^{*},M_{V},K_{C}^{*};F)\right\} ,\\
J_{3}(K_{C}) & = & P\left\{ \Phi(\psi^{*},\xi^{*},M_{V}^{*},K_{C};F)\right\} ,
\end{eqnarray*}
and
\[
J(\xi,M_{V},K_{C})=P\left\{ \Phi(\psi^{*},\xi,M_{V},K_{C};F)\right\} .
\]

We now assume the regularity conditions, which are standard in the
empirical process literature \citep{van1996weak}. See also \citet{yang2015gof}
for the application of the empirical process to derive a goodness-of-fit
test for the structural nested mean models.

\begin{assumption}\label{asump:donsker} With probability going to
one, $\Phi(\psi,\xi,M_{V},K_{C};F)$ and $\partial\Phi(\psi,\xi,M_{V},K_{C};F)/\partial\psi$
are $P$-Donsker classes.

\end{assumption}

\begin{assumption}\label{asump convg}For $(\xi^{*},M_{V}^{*},K_{C}^{*})$
with either $\xi^{*}$ being the true parameter such that $E\{U(\psi^{*})\mid\overline{H}_{u},V\geq u;\xi^{*}\}=E\{U(\psi^{*})\mid\overline{H}_{u},V\geq u\}$
or $M_{V}^{*}=M_{V}$, and $K_{C}^{*}=K_{C}$,
\[
P\left\{ ||\Phi(\psi^{*},\widehat{\xi},\widehat{M}_{V},\widehat{K}_{C};F)-\Phi(\psi^{*},\xi^{*},M_{V}^{*},K_{C}^{*};F)||\right\} \rightarrow0
\]
and
\[
P\left\{ ||\frac{\partial}{\partial\psi}\Phi(\widehat{\psi},\widehat{\xi},\widehat{M}_{V},\widehat{K}_{C};F)-\frac{\partial}{\partial\psi}\Phi(\psi^{*},\xi^{*},M_{V}^{*},K_{C}^{*};F)||\right\} \rightarrow0
\]
in probability.

\end{assumption}

\begin{assumption}\label{assump:inv} $A(\psi^{*},\xi^{*},M_{V}^{*},K_{C}^{*})=P\left\{ \partial\Phi(\psi^{*},\xi^{*},M_{V}^{*},K_{C}^{*};F)/\partial\psi\right\} $
is invertible.

\end{assumption}

\begin{assumption}\label{asump: IF}Assume that
\begin{eqnarray*}
J(\widehat{\xi},\widehat{M}_{V},\widehat{K}_{C})-J(\xi^{*},M_{V}^{*},K_{C}^{*}) & = & J_{1}(\widehat{\xi})-J_{1}(\xi^{*})+J_{2}(\widehat{M}_{V})-J_{2}(M_{V}^{*})\\
 &  & +J_{3}(\widehat{K}_{C})-J_{3}(K_{C}^{*})+o_{p}(n^{-1/2}),
\end{eqnarray*}
and that $J_{1}(\widehat{\xi})$, $J_{2}(\widehat{M}_{V})$, and $J_{3}(\widehat{K}_{C})$
are regular asymptotically linear with influence function $\Phi_{1}(\psi^{*},\xi^{*},M_{V}^{*},K_{C}^{*};F)$,
$\Phi_{2}(\psi^{*},\xi^{*},M_{V}^{*},K_{C}^{*};F)$, and $\Phi_{3}(\psi^{*},\xi^{*},M_{V}^{*},K_{C}^{*};F)$,
respectively.

\end{assumption}

Assumption \ref{asump:donsker} is an empirical process condition.
This assumption is technical and depends on the submodel chosen models
for the unknown parameters. Assuming the positivity condition for
the censoring process, this assumption can typically be considered
as a regularity condition.

Assumption \ref{asump convg} basically states that $\widehat{\xi}$,
$\widehat{M}_{V}$, and $\widehat{K}_{C}$ are consistent for $\xi^{*}$,
$M_{V}^{*}$, and $K_{C}$ and requires tha{\small{}t}
\begin{multline*}
E\left\{ \int c(\overline{H}_{u})\left[E\left\{ U(\psi^{*})\mid\overline{H}_{u},V\geq u;\widehat{\xi}\right\} \right.\right.\\
-\left.\left.E\left\{ U(\psi^{*})\mid\overline{H}_{u},V\geq u;\xi^{*}\right\} \right]\left\{ \widehat{\lambda}_{V}(u)-\lambda_{V}^{*}(u)\right\} \de u\right.=o_{p}(n^{-1/2}),
\end{multline*}
an{\small{}d}
\begin{multline*}
E\left\{ \int c(\overline{H}_{u})\left[E\left\{ \frac{\partial U(\psi^{*})}{\partial\psi}\mid\overline{H}_{u},V\geq u;\widehat{\xi}\right\} \right.\right.\\
-\left.\left.E\left\{ \frac{\partial U(\psi^{*})}{\partial\psi}\mid\overline{H}_{u},V\geq u;\xi^{*}\right\} \right]\left\{ \widehat{\lambda}_{V}(u)-\lambda_{V}^{*}(u)\right\} \de u\right\} =o_{p}(n^{-1/2}).
\end{multline*}

Because smooth functionals of parametric or semiparametric maximum
likelihood estimators for a given model are efficient under regularity
conditions, Assumption \ref{asump: IF} holds under regularity conditions
if $\widehat{\xi}$ and $\widehat{M}_{V}$ are the parametric and
semiparametric maximum likelihood estimators of $\xi^{*}$ and $M_{V}^{*}$
under the specified models.

We present the asymptotic properties of the proposed estimator $\widehat{\psi}$
solving equation (\ref{eq:IPCW}), denoted by $P_{n}\left\{ \Phi(\psi,\widehat{\xi},\widehat{M}_{V},\widehat{K}_{C};F)\right\} =0$.

\begin{theorem}\label{thm:s1}Under Assumptions \ref{asp:positivity}
and \ref{asump:donsker}\textendash \ref{asump: IF}, $n^{1/2}\left(\widehat{\psi}-\psi^{*}\right)$
is consistent and asymptotically linear with the influence function
$\widetilde{\Phi}(\psi^{*},\xi^{*},M_{V}^{*},K_{C}^{*};F)=\left\{ A(\psi^{*},\xi^{*},M_{V}^{*},K_{C}^{*})\right\} ^{-1}\widetilde{B}(\psi^{*},\xi^{*},M_{V}^{*},K_{C}^{*};F)$,
and
\begin{eqnarray}
\widetilde{B}(\psi^{*},\xi^{*},M_{V}^{*},K_{C}^{*};F) & = & \Phi(\psi^{*},\xi^{*},K_{V}^{*},K_{C}^{*};F)+\Phi_{1}(\psi^{*},\xi^{*},K_{V}^{*},K_{C}^{*};F)\nonumber \\
 &  & +\Phi_{2}(\psi^{*},\xi^{*},K_{V}^{*},K_{C}^{*};F)+\Phi_{3}(\psi^{*},\xi^{*},K_{V}^{*},K_{C}^{*};F).\label{eq:influence fctn}
\end{eqnarray}
Moreover, if the nuisance models including the models for the censoring
process and the treatment process and the outcome model are correctly
specified, (\ref{eq:influence fctn}) becomes
\begin{eqnarray}
 &  & \widetilde{B}(\psi^{*},\xi^{*},K_{V},K_{C};F)\nonumber \\
 & = & \Phi(\psi^{*},\xi^{*},K_{V},K_{C};F)-\prod\left\{ \Phi(\psi^{*},\xi^{*},K_{V},K_{C};F)\mid\widetilde{\Lambda}\right\} \nonumber \\
 & = & \Phi(\psi^{*},\xi^{*},K_{V},K_{C};F)-E\left\{ \Phi(\psi^{*},\xi^{*},K_{V},K_{C};F)S_{\gamma_{V}}^{\T}\right\} E\left(S_{\gamma_{V}}S_{\gamma_{V}}^{\T}\right)^{-1}S_{\gamma_{V}}\nonumber \\
 &  & -E\left\{ \Phi(\psi^{*},\xi^{*},K_{V},K_{C};F)S_{\gamma_{C}}^{\T}\right\} E\left(S_{\gamma_{C}}S_{\gamma_{C}}^{\T}\right)^{-1}S_{\gamma_{C}}\nonumber \\
 &  & +\int\frac{E\left[G(\psi^{*},\xi^{*},K_{V};F)\exp\left\{ \gamma_{C}^{\T}g_{C}(u,\overline{H}_{u})\right\} \Delta/K_{C}(T\mid\overline{H}_{T})\right]}{E\left[\exp\left\{ \gamma_{C}^{\T}g_{C}(u,\overline{H}_{u})\right\} Y_{C}(u)\right]}\de M_{C}(u)\nonumber \\
 &  & +\int\frac{E\left[G(\psi^{*},\xi^{*},K_{V};F)\exp\left\{ \gamma_{V}^{\T}g_{V}(u,\overline{H}_{u})\right\} \Delta/K_{C}(T\mid\overline{H}_{T})\right]}{E\left[\exp\left\{ \gamma_{V}^{\T}g_{V}(u,\overline{H}_{u})\right\} Y_{V}(u)\right]}\de M_{V}(u).\label{eq:tilde-J}
\end{eqnarray}

\end{theorem}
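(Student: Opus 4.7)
My plan is a two-step argument: first obtain \eqref{eq:influence fctn} by a generic $Z$-estimator expansion, then evaluate the correction pieces using Cox tangent-space calculations.

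First, I would Taylor expand $0 = P_n\Phi(\widehat\psi,\widehat\xi,\widehat M_V,\widehat K_C;F)$ in $\psi$ about $\psi^*$. By Assumption \ref{asump convg} the derivative matrix converges in probability to $A(\psi^*,\xi^*,M_V^*,K_C^*)$, which is invertible by Assumption \ref{assump:inv}, so
\[
n^{1/2}(\widehat\psi-\psi^*) = -A(\psi^*,\xi^*,M_V^*,K_C^*)^{-1}\,n^{1/2}P_n\Phi(\psi^*,\widehat\xi,\widehat M_V,\widehat K_C;F) + o_p(1).
\]
I would then decompose the right-hand side into the empirical-process leading term $(P_n-P)\Phi(\psi^*,\xi^*,M_V^*,K_C^*;F)$, the drift $(P_n-P)\{\Phi(\psi^*,\widehat\xi,\widehat M_V,\widehat K_C;F)-\Phi(\psi^*,\xi^*,M_V^*,K_C^*;F)\}$, and the deterministic bias $J(\widehat\xi,\widehat M_V,\widehat K_C)$. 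The Donsker property in Assumption \ref{asump:donsker} together with the $L_2(P)$ consistency in Assumption \ref{asump convg} forces the drift to be $o_p(n^{-1/2})$ via stochastic equicontinuity. Theorems \ref{Thm:2-dr} and \ref{Thm:ipcw} combined give $J(\xi^*,M_V^*,K_C^*)=0$ whenever $K_C$ and at least one of $\xi,M_V$ are correctly specified, and Assumption \ref{asump: IF} linearises the remaining bias into $P_n[\Phi_1+\Phi_2+\Phi_3]+o_p(n^{-1/2})$. Assembling the pieces produces \eqref{eq:influence fctn} with the sign of each $\Phi_j$ absorbed into its definition.

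Second, for the explicit formula \eqref{eq:tilde-J} under correct specification of all three nuisance models, I would compute each $\Phi_j$ as a functional influence function of $J$ along the relevant nuisance path. For the finite-dimensional Cox parameters $\gamma_V,\gamma_C$ fitted by partial likelihood, the influence function of $\widehat\gamma_\cdot$ is $E\{S_{\gamma_\cdot}S_{\gamma_\cdot}^\T\}^{-1}S_{\gamma_\cdot}$, and the chain rule converts $J_j(\widehat\gamma_\cdot)$ into a projection of $\Phi$ onto $S_{\gamma_\cdot}$ with coefficient $E\{\Phi S_{\gamma_\cdot}^\T\}$, yielding the first two correction terms of \eqref{eq:tilde-J}. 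For the baseline cumulative hazards $\Lambda_{V,0},\Lambda_{C,0}$ fitted by the Breslow estimator, the standard linearisation writes $\widehat\Lambda_{\cdot,0}(t)-\Lambda_{\cdot,0}(t)$ as an integral against $\de M_\cdot$ divided by $E\{\exp(\gamma_\cdot^\T g_\cdot)Y_\cdot\}$, and Gateaux-differentiating $J$ in the direction of a perturbation of $\Lambda_{\cdot,0}$ produces a kernel whose coefficient is $E\{G\exp(\gamma_\cdot^\T g_\cdot)\Delta/K_C\}$, giving the last two correction terms. Collecting the four contributions identifies them with $-\prod\{\Phi\mid\widetilde\Lambda\}$, and the classical orthogonality of the partial-likelihood score $S_{\gamma_\cdot}$ to the baseline-hazard martingale directions confirms that \eqref{eq:tilde-J} is the correct projection.

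The main obstacle will be the Gateaux differentiation along the Breslow paths: verifying that the resulting kernel simplifies to $E\{G\exp(\gamma_\cdot^\T g_\cdot)\Delta/K_C\}/E\{\exp(\gamma_\cdot^\T g_\cdot)Y_\cdot(u)\}$ requires an interchange of integration and conditional expectation, justified by Assumption \ref{asp:positivity} bounding the inverse censoring weight $\Delta/K_C(T\mid\overline H_T)$. A secondary technicality is confirming that the products of $\Phi$ with Cox score components remain in a $P$-Donsker class so that Assumption \ref{asump:donsker} propagates through the decomposition, and that Lemma \ref{lemma:1} applies to convert predictable covariation on the Cox martingales into the stated ratios. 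Both are routine given the stated regularity conditions but require careful bookkeeping with the two nested filtrations.
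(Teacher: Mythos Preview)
Your proposal is correct and for the first part (the derivation of \eqref{eq:influence fctn}) it is essentially identical to the paper's proof: Taylor expand in $\psi$, use the Donsker condition plus $L_2(P)$ consistency to kill the stochastic-equicontinuity drift, invoke double robustness to zero out $J(\xi^*,M_V^*,K_C^*)$, and linearise the remaining bias via Assumption~\ref{asump: IF}.

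For the second part (the explicit formula \eqref{eq:tilde-J}) you take a slightly different but equivalent route. The paper argues top-down from semiparametric theory: when all nuisance models are correct, the influence function must equal $\Phi-\prod\{\Phi\mid\widetilde\Lambda\}$, and it computes that projection by decomposing $\widetilde\Lambda=\widetilde\Lambda_1\oplus\widetilde\Lambda_2\oplus\widetilde\Lambda_3$ into mutually orthogonal pieces (the $\xi$-score space, the Cox tangent space for $V$, and the Cox tangent space for $C$). You instead work bottom-up, computing each $\Phi_j$ directly as the Gateaux derivative of $J$ along the partial-likelihood and Breslow paths, and only afterward identify the sum with the projection. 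Both approaches are standard and yield the same answer; yours is arguably more constructive, while the paper's makes the orthogonality structure more transparent. One point worth making explicit in your write-up: the reason no $\xi$-correction appears in \eqref{eq:tilde-J} is that when $M_V^*=M_V$ is correct, $J_1(\xi)=P\{\Phi(\psi^*,\xi,M_V,K_C;F)\}=0$ for \emph{all} $\xi$ by the double-robustness argument of Theorem~\ref{Thm:2-dr}, so $\Phi_1\equiv 0$; equivalently, $\Phi$ is already orthogonal to $\widetilde\Lambda_1$.
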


\begin{proof}

We assume that the model for the censoring process is correctly specified,
either the outcome model or the model for the treatment process is
correctly specified.

Taylor expansion of $P_{n}\left\{ \Phi(\widehat{\psi},\widehat{\xi},\widehat{M}_{V},\widehat{K}_{C};F)\right\} =0$
around $\psi^{*}$ leads to
\[
0=P_{n}\left\{ \Phi(\widehat{\psi},\widehat{\xi},\widehat{M}_{V},\widehat{K}_{C};F)\right\} =P_{n}\left\{ \Phi(\psi^{*},\widehat{\xi},\widehat{M}_{V},\widehat{K}_{C};F)\right\} +P_{n}\left\{ \frac{\partial\Phi(\widetilde{\psi},\widehat{\xi},\widehat{M}_{V},\widehat{K}_{C};F)}{\partial\psi^{\T}}\right\} (\widehat{\psi}-\psi^{*}),
\]
where $\widetilde{\psi}$ is on the line segment between $\widehat{\psi}$
and $\psi^{*}$.

Under Assumptions \ref{asump:donsker} and \ref{asump convg},
\[
(P_{n}-P)\left\{ \frac{\partial\Phi(\widetilde{\psi},\widehat{\xi},\widehat{M}_{V},\widehat{K}_{C};F)}{\partial\psi^{\T}}\right\} =(P_{n}-P)\left\{ \frac{\partial\Phi(\psi^{*},\xi^{*},M_{V}^{*},K_{C}^{*};F)}{\partial\psi^{\T}}\right\} =o_{p}(n^{-1/2}),
\]
and therefore,
\begin{eqnarray*}
P_{n}\left\{ \frac{\partial\Phi(\widetilde{\psi},\widehat{\xi},\widehat{M}_{V},\widehat{K}_{C};F)}{\partial\psi^{\T}}\right\}  & = & P\left\{ \frac{\partial\Phi(\widetilde{\psi},\widehat{\xi},\widehat{M}_{V},\widehat{K}_{C};F)}{\partial\psi^{\T}}\right\} +o_{p}(n^{-1/2})\\
 & = & A(\psi^{*},\xi^{*},M_{V}^{*},K_{C}^{*})+o_{p}(n^{-1/2}).
\end{eqnarray*}
We then have
\begin{equation}
n^{1/2}(\widehat{\psi}-\psi^{*})=\left\{ A(\psi^{*},\xi^{*},M_{V}^{*},K_{C}^{*})\right\} ^{-1}n^{1/2}P_{n}\left\{ \Phi(\psi^{*},\widehat{\xi},\widehat{M}_{V},\widehat{K}_{C};F)\right\} +o_{p}(1).\label{eq:1}
\end{equation}
To evaluate (\ref{eq:1}) further,
\begin{multline}
P_{n}\Phi(\psi^{*},\widehat{\xi},\widehat{M}_{V},\widehat{K}_{C};F)=(P_{n}-P)\Phi(\psi^{*},\widehat{\xi},\widehat{M}_{V},\widehat{K}_{C};F)\\
+P\left\{ \Phi(\psi^{*},\widehat{\xi},\widehat{M}_{V},\widehat{K}_{C};F)-\Phi(\psi^{*},\xi^{*},M_{V}^{*},K_{C}^{*};F)\right\} +P\Phi(\psi^{*},\xi^{*},M_{V}^{*},K_{C}^{*};F).\label{eq:2}
\end{multline}
Based on the double robustness, the third term becomes
\begin{equation}
P\Phi(\psi^{*},\xi^{*},M_{V}^{*},K_{C}^{*};F)=0.\label{eq:2-1}
\end{equation}
By Assumptions \ref{asump:donsker} and \ref{asump convg}, the first
term becomes
\begin{eqnarray}
(P_{n}-P)\Phi(\psi^{*},\widehat{\xi},\widehat{M}_{V},\widehat{K}_{C};F) & = & (P_{n}-P)\Phi(\psi^{*},\xi^{*},M_{V}^{*},K_{C}^{*};F)+o_{p}(n^{-1/2})\nonumber \\
 & = & P_{n}\Phi(\psi^{*},\xi^{*},M_{V}^{*},K_{C}^{*};F)+o_{p}(n^{-1/2}).\label{eq:2-2}
\end{eqnarray}
By Assumption \ref{asump: IF}, the second term becomes
\begin{eqnarray}
 &  & P\left\{ \Phi(\psi^{*},\widehat{\xi},\widehat{M}_{V},\widehat{K}_{C};F)-\Phi(\psi^{*},\xi^{*},M_{V}^{*},K_{C}^{*};F)\right\} \nonumber \\
 & = & J(\widehat{\xi},\widehat{M}_{V},\widehat{K}_{C})-J(\xi^{*},M_{V}^{*},K_{C}^{*})+o_{p}(n^{-1/2})\nonumber \\
 & = & J_{1}(\widehat{\xi})-J_{1}(\xi^{1})+J_{2}(\widehat{M}_{V})-J_{2}(M_{V}^{1})+J_{3}(\widehat{K}_{C})-J_{3}(M_{C}^{1})+o_{p}(n^{-1/2})\nonumber \\
 & = & P_{n}\left\{ \Phi_{1}(\psi^{*},\xi^{*},M_{V}^{*},K_{C}^{*};F)+\Phi_{2}(\psi^{*},\xi^{*},M_{V}^{*},K_{C}^{*};F)+\Phi_{3}(\psi^{*},\xi^{*},M_{V}^{*},K_{C}^{*};F)\right\} .\label{eq:2-3}
\end{eqnarray}

Combining (\ref{eq:2-2})\textendash (\ref{eq:2-1}) with (\ref{eq:2}),
\[
P_{n}\Phi(\psi^{*},\widehat{\xi},\widehat{M}_{V},\widehat{K}_{C};F)=P_{n}\{\widetilde{\Phi}(\psi^{*},\xi^{*},M_{V}^{*},K_{C}^{*};F)\},
\]
where
\begin{eqnarray*}
\widetilde{B}(\psi^{*},\xi^{*},M_{V}^{*},K_{C}^{*};F) & = & \Phi(\psi^{*},\xi^{*},M_{V}^{*},K_{C}^{*};F)+\Phi_{1}(\psi^{*},\xi^{*},M_{V}^{*},K_{C}^{*};F)\\
 &  & +\Phi_{2}(\psi^{*},\xi^{*},M_{V}^{*},K_{C}^{*};F)+\Phi_{3}(\psi^{*},\xi^{*},M_{V}^{*},K_{C}^{*};F).
\end{eqnarray*}

Therefore, $\widehat{\psi}-\psi^{*}$ has the influence function
\[
\widetilde{\Phi}(\psi^{*},\xi^{*},M_{V}^{*},K_{C}^{*};F)=\left\{ A(\psi^{*},\xi^{*},M_{V}^{*},K_{C}^{*})\right\} ^{-1}\widetilde{B}(\psi^{*},\xi^{*},M_{V}^{*},K_{C}^{*};F).
\]
As a result,
\begin{equation}
n^{1/2}(\widehat{\psi}-\psi^{*})=n^{1/2}P_{n}\widetilde{\Phi}(\psi^{*},\xi^{*},K_{V}^{*},K_{C}^{*};F)+o_{p}(1).\label{eq:(2.3)}
\end{equation}
Based on (\ref{eq:(2.3)}),
\[
n^{1/2}(\widehat{\psi}-\psi^{*})\rightarrow\N\left(0,\Omega\right),
\]
as $n\rightarrow\infty$, where $\Omega=E\left\{ \widetilde{\Phi}(\psi^{*},\xi^{*},M_{V}^{*},K_{C}^{*};F)\widetilde{\Phi}(\psi^{*},\xi^{*},M_{V}^{*},K_{C}^{*};F)^{\T}\right\} $.

For the special case where both nuisance models are correctly specified,
we characterize $\widetilde{B}(\psi^{*},\xi^{*},K_{V}^{*},K_{C}^{*};F)$.
In this case, $E\{U(\psi^{*})\mid\overline{H}_{u},V\geq u;\xi^{*}\}=E\{U(\psi^{*})\mid\overline{H}_{u},V\geq u\}$,
$M_{V}^{*}=M_{V},$ and $K_{C}^{*}=K_{C}$. Define the score functions:
$S_{\xi}=S_{\xi}\{U(\psi^{*}),\overline{H}_{u},V\geq u\}$,
\[
S_{\gamma_{V}}=\int\left\{ g_{V}(u,\overline{H}_{u})-\frac{E\left[g_{V}(u,\overline{H}_{u})\exp\left\{ \gamma_{V}^{\T}g_{V}(u,\overline{H}_{u})\right\} Y_{V}(u)\right]}{E\left[\exp\left\{ \gamma_{V}^{\T}g_{V}(u,\overline{H}_{u})\right\} Y_{V}(u)\right]}\right\} \de M_{V}(u),
\]
and
\[
S_{\gamma_{C}}=\int\left\{ g_{C}(u,\overline{H}_{u})-\frac{E\left[g_{C}(u,\overline{H}_{u})\exp\left\{ \gamma_{C}^{\T}g_{C}(u,\overline{H}_{u})\right\} Y_{C}(u)\right]}{E\left[\exp\left\{ \gamma_{C}^{\T}g_{C}(u,\overline{H}_{u})\right\} Y_{C}(u)\right]}\right\} \de M_{C}(u).
\]
The tangent space for $\xi$ is $\widetilde{\Lambda}_{1}=\{S_{\xi}\in\R^{p}:E(S_{\xi}\mid\overline{H}_{u},V\geq u)=0\}$.
Following \citet{tsiatis2007semiparametric}, the nuisance tangent
space for the proportional hazards model (\ref{eq:ph-V}) is
\[
\widetilde{\Lambda}_{2}=\left\{ S_{\gamma_{V}}+\int h(u)\de M_{V}(u):\ h(u)\in\mathbb{\R}^{p}\right\} ,
\]
and the nuisance tangent space for the proportional hazards model
(\ref{eq:ph-C}) is
\[
\widetilde{\Lambda}_{3}=\left\{ S_{\gamma_{C}}+\int h(u)\de M_{C}(u):\ h(u)\in\mathbb{\R}^{p}\right\} .
\]
Assuming that the treatment process and the censoring process can
not jump at the same time point, $\widetilde{\Lambda}_{1}$, $\widetilde{\Lambda}_{2}$,
and $\widetilde{\Lambda}_{3}$ are mutually orthogonal to each other.
Therefore, the nuisance tangent space for $\xi$ and the proportional
hazards models (\ref{eq:ph-V}) and (\ref{eq:ph-C}) is $\widetilde{\Lambda}=\widetilde{\Lambda}_{1}\oplus\widetilde{\Lambda}_{2}\oplus\widetilde{\Lambda}_{3}$.
The influence function for $\widehat{\psi}$ is
\begin{eqnarray*}
 &  & \widetilde{B}(\psi^{*},\xi^{*},M_{V},K_{C};F)\\
 & = & \Phi(\psi^{*},\xi^{*},M_{V},K_{C};F)-\prod\left\{ \Phi(\psi^{*},\xi^{*},M_{V},K_{C};F)\mid\widetilde{\Lambda}\right\} \\
 & = & \Phi(\psi^{*},\xi^{*},M_{V},K_{C};F)-E\left\{ \Phi(\psi^{*},\xi^{*},M_{V},K_{C};F)S_{\gamma_{V}}^{\T}\right\} E\left(S_{\gamma_{V}}S_{\gamma_{V}}^{\T}\right)^{-1}S_{\gamma_{V}}\\
 &  & -E\left\{ \Phi(\psi^{*},\xi^{*},M_{V},K_{C};F)S_{\gamma_{C}}^{\T}\right\} E\left(S_{\gamma_{C}}S_{\gamma_{C}}^{\T}\right)^{-1}S_{\gamma_{C}}\\
 &  & +\int\frac{E\left[G(\psi^{*},\xi^{*},M_{V};F)\exp\left\{ \gamma_{C}^{\T}g_{C}(u,\overline{H}_{u})\right\} \Delta/K_{C}(T\mid\overline{H}_{T})\right]}{E\left[\exp\left\{ \gamma_{C}^{\T}g_{C}(u,\overline{H}_{u})\right\} Y_{C}(u)\right]}\de M_{C}(u)\\
 &  & +\int\frac{E\left[G(\psi^{*},\xi^{*},M_{V};F)\exp\left\{ \gamma_{V}^{\T}g_{V}(u,\overline{H}_{u})\right\} \Delta/K_{C}(T\mid\overline{H}_{T})\right]}{E\left[\exp\left\{ \gamma_{V}^{\T}g_{V}(u,\overline{H}_{u})\right\} Y_{V}(u)\right]}\de M_{V}(u).
\end{eqnarray*}

\end{proof} 

\section{The Cox marginal structural model approach: $\widehat{\psi}_{\msm}$}

The Cox marginal structural model approach assumes that the potential
failure time under $\overline{a}_{T}$ follows a Cox proportional
hazards model with the hazard rate at $t$ as $\lambda_{0}(t)\exp(\psi^{*}a_{t})$.

If all potential failure times were observed for all subjects, one
can fit a Cox proportional hazards model with the time-varying covariate
$a_{t}$ to obtain a consistent estimator of $\psi^{*}$. However,
not all potential outcomes are observed for a particular subject.
To obtain a consistent estimator based on the actual observed data,
the key step is to construct time-dependent inverse probability of
treatment weights for all subjects and weight their contributions
so that they mimic the contributions had all potential outcomes been
observed.

From the hazard of treatment discontinuation $\lambda_{V}\left(t\mid\overline{H}_{t}\right)$
defined in (\ref{eq:UNC}), denote 
\begin{equation}
K_{V}\left(t\mid\overline{H}_{t}\right)=\exp\left\{ -\int_{0}^{t}\lambda_{V}(u\mid\overline{H}_{u})\de u\right\} \label{(11.3)}
\end{equation}
and 
\begin{equation}
f_{V}\left(t\mid\overline{H}_{t}\right)=\lambda_{V}\left(t\mid\overline{H}_{t}\right)K_{V}\left(t\mid\overline{H}_{t}\right).\label{(11.4)}
\end{equation}
For ease of notation, denote $K_{V}(t)=K_{V}\left(t\mid\overline{H}_{t}\right)$
and $f_{V}(t)=f_{V}\left(t\mid\overline{H}_{t}\right)$ for shorthand.
These can be viewed as the probability of not having discontinued
before time $t$ and the probability of discontinuing at time $[t,t+\de t)$,
respectively.

Consider subjects who were are at risk at time $t$. We consider two
subsets of individuals: group (a) with $V\leq t$ and $\Gamma=1$
and group (b) with $V>t$. Specifically, we construct the time-dependent
inverse probability of treatment weight as\textcolor{black}{
\begin{equation}
\omega(t)=\begin{cases}
\theta(V)/f_{V}(V), & \text{if }V\leq t\text{ and }\Gamma=1,\\
\overline{\theta}(t)/K_{V}(t) & \text{if }V>t,
\end{cases}\label{(13.1)}
\end{equation}
}where $\theta(t)$ and $\overline{\theta}(t)=\int_{t}^{\infty}\theta(u)\de u$
serve as the stabilized weights \citep{hernan2000marginal}. Following
\citep{yang2018modeling}, one can consider $\theta(t)=\lambda_{V,0}(t)\exp\left\{ -\int_{0}^{t}\lambda_{V,0}(u)\de u\right\} $.
In the presence of censoring, let $\omega(t)$ be a product of (\ref{(13.1)})
and the inverse of censoring probability $\Delta/K_{C}\left(T\mid\overline{H}_{T}\right)$.
One can estimate the weights by replacing the unknown quantities with
their estimates following Steps 1 and 2 in $\mathsection$ \ref{subsec:algorithm}.

Finally, we obtain $\widehat{\psi}_{\msm}$ by fitting a Cox proportional
hazards model with the time-varying covariate $A_{t}$ with the time-dependent
weight $\omega(t)$ using the standard software; e.g., the function
``coxph'' in R with the weighting argument.

\section{The discrete-time g-estimator: $\widehat{\psi}_{\disc}$}

The existing framework for fitting the structural failure time model
is using a discrete time points setting which requires manually discretizing
the data. We disretize the timeline into equally-spaced time points
from $0$ to the maximum follow up $\tau$, denoted as $0=t_{0}<t_{1}<\cdots<t_{K}=\tau$.
For $m\geq1$, at the $m$th time point $t_{m}$, let $A_{t_{m}}$
be the indicator of whether the treatment is received at $t_{m}$,
let $L_{t_{m}}$ be the the average of $L_{t}$ from $t_{m-1}\leq t\leq t_{m}$,
let $H_{t_{m}}$ be the vector of $A_{t_{m}-1}$ and $L_{t_{m}}$,
and finally let $\overline{H}_{t_{m}}$ be $\{H_{0},\ldots,H_{t_{m}}\}$.
With observations at discrete time points, $\de N_{T}(t_{m})$ becomes
the binary treatment indicator $A_{t_{m}}$, $\lambda_{T}(u\mid\overline{H}_{u})Y_{T}(u)\de u$
becomes the propensity score $E(A_{t_{m}}\mid\overline{H}_{t_{m}},\overline{A}_{t_{m}-1}=\overline{0})$,
and the integral in (\ref{eq:ee4}) becomes the summation from $m=1$
to $K$. As a result, in the absence of censoring, (\ref{eq:ee4})
simplifies to the existing estimating equation for structural nested
failure time models \citep{hernan2005structural}. Following \citep{hernan2005structural},
one can estimate the propensity score by the pooled logistic regression
model with baseline and time-dependent covariates. In the presence
of censoring, one can estimate the censoring probability by the pooled
logistic regression model with baseline and time-dependent. The g-estimator
$\widehat{\psi}_{\disc}$ of $\psi^{*}$ solves the estimating equation
(\ref{eq:IPCW2}) with observations at discrete time points.

\section{Details and additional results in the simulation}

In this section, we present details for the Jackknife method for variance
estimation and additional simulation results to assess the impact
of misspecification of the censoring model and the treatment effect
model.

The Jackknife method entails dividing the subjects into exclusive
and exhaustive subgroups, creating replicate datasets by deleting
one group at a time, and applying the same estimation procedure to
obtain the replicates of $\widehat{\psi}$. The variance estimator
is $\widehat{V}(\widehat{\psi})=G^{-1}(G-1)\sum_{k=1}^{G}\left(\widehat{\psi}^{(k)}-\widehat{\psi}\right)^{2},$
where $G$ is the number of subgroups, and $\widehat{\psi}^{(k)}$
is the $k$th replicate of $\widehat{\psi}.$

We now focus on the scenario 1 of the simulation study in $\mathsection$
\ref{sec:Simulation-Study}. First in setting 1, to illustrate the
impact of misspecification of the censoring model, for all estimators,
we consider an incorrect independent censoring mechanism for fitting
the censoring model in the sense that the censoring indicator is independent
of all other variables. Second in setting 2, to illustrate the impact
of misspecification of the treatment effect model, we now generate
the failure time, $T$, according to a structural failure time model
$U\sim\int_{0}^{T}\exp(\psi^{*}A_{u}+0.5X_{0})\de u$. All estimators
are the same as in $\mathsection$ \ref{sec:Simulation-Study}.

Table \ref{tab:Results-supp} summarizes the simulation results with
$n=1,000$. In setting 1 when the censoring model is misspecified,
the proposed estimators have larger biases compared to the results
when the censoring model is correctly specified as in Table \ref{tab:Results1}.
In setting 2 when the treatment effect model is misspecified, the
proposed estimators also have increased biases compared to the results
when the treatment effect model is correctly specified as in Table
\ref{tab:Results1}. The coverage rates are off the nominal coverage
in most of cases.

\begin{table}
\caption{\label{tab:Results-supp}Simulation results: bias, standard deviation,
root mean squared error, and coverage rate of $95\%$ confidence intervals
for $\exp(\psi^{*})$ over $1,000$ simulated datasets: Setting 1
where the censoring model is misspecified, and Setting 2 where the
treatment effect model is misspecified}

\centering{}\resizebox{\textwidth}{!}{%
\begin{tabular}{cccccccccccc}
 &  &  & \multicolumn{3}{c}{$\psi{}^{*}=-0.5$} & \multicolumn{3}{c}{$\psi{}^{*}=0$} & \multicolumn{3}{c}{$\psi{}^{*}=0.5$}\tabularnewline
 &  &  & Bias & S.E. & C.R. & Bias & S.E. & C.R. & Bias & S.E. & C.R.\tabularnewline
\multirow{2}{*}{} & \multirow{2}{*}{$\widehat{\psi}_{\naive}$} & $c$ & 0.06 & 0.048 & 76.8 & 0.02 & 0.069 & 95.6 & -0.06 & 0.112 & 92.4\tabularnewline
 &  & $c^{\opt}$ & 0.05 & 0.043 & 78.4 & 0.02 & 0.063 & 95.0 & -0.05 & 0.107 & 91.8\tabularnewline
\multirow{2}{*}{Setting 1} & \multirow{2}{*}{$\widehat{\psi}_{\ipcw}$} & $c$ & 0.14 & 0.086 & 68.2 & 0.04 & 0.103 & 97.4 & -0.13 & 0.153 & 82.2\tabularnewline
 &  & $c^{\opt}$ & 0.13 & 0.072 & 61.6 & 0.04 & 0.089 & 96.8 & -0.12 & 0.136 & 81.4\tabularnewline
\multirow{2}{*}{} & \multirow{2}{*}{$\widehat{\psi}_{\dr}$} & $c$ & 0.04 & 0.050 & 88.4 & 0.01 & 0.070 & 95.0 & -0.04 & 0.116 & 94.0\tabularnewline
 &  & $c^{\opt}$ & 0.04 & 0.048 & 89.0 & 0.01 & 0.069 & 95.2 & -0.03 & 0.115 & 93.2\tabularnewline
 & $\widehat{\psi}_{\msm}$ &  & 0.04 & 0.045 & 92.6 & 0.02 & 0.073 & 96.8 & -0.04 & 0.135 & 94.6\tabularnewline
 & $\widehat{\psi}_{\disc}$ &  & -0.40 & 0.035 & 0.0 & -0.65 & 0.046 & 0.0 & -1.09 & 0.070 & 0.0\tabularnewline
 &  &  &  &  &  &  &  &  &  &  & \tabularnewline
\multirow{2}{*}{} & \multirow{2}{*}{$\widehat{\psi}_{\naive}$} & $c$ & 0.05 & 0.049 & 87.0 & -0.01 & 0.069 & 94.8 & -0.12 & 0.108 & 80.2\tabularnewline
 &  & $c^{\opt}$ & 0.03 & 0.044 & 90.8 & -0.03 & 0.064 & 92.2 & -0.14 & 0.100 & 73.4\tabularnewline
\multirow{2}{*}{Setting 2} & \multirow{2}{*}{$\widehat{\psi}_{\ipcw}$} & $c$ & -0.02 & 0.088 & 95.6 & -0.05 & 0.120 & 94.6 & -0.10 & 0.174 & 90.8\tabularnewline
 &  & $c^{\opt}$ & -0.04 & 0.070 & 93.4 & -0.07 & 0.096 & 93.2 & -0.12 & 0.136 & 86.2\tabularnewline
\multirow{2}{*}{} & \multirow{2}{*}{$\widehat{\psi}_{\dr}$} & $c$ & -0.02 & 0.053 & 94.2 & -0.03 & 0.079 & 91.0 & -0.08 & 0.122 & 90.6\tabularnewline
 &  & $c^{\opt}$ & -0.02 & 0.050 & 91.6 & -0.05 & 0.073 & 89.4 & -0.11 & 0.115 & 84.8\tabularnewline
 & $\widehat{\psi}_{\msm}$ &  & -0.01 & 0.049 & 96.6 & -0.03 & 0.082 & 92.2 & -0.09 & 0.134 & 91.6\tabularnewline
 & $\widehat{\psi}_{\disc}$ &  & -0.38 & 0.041 & 0.0 & -0.63 & 0.053 & 0.0 & -1.06 & 0.085 & 0.2\tabularnewline
\end{tabular}}
\end{table}

\section{Nuisance models in the application}

In this section, we provide details for fitting the nuisance models
in the application. To build a model for $\lambda_{V}(t\mid\overline{H}_{t})$
in (\ref{eq:UNC}), we consider the baseline covariates $X$, including
age, gender, race, site, country, and other $25$ baseline health
outcome measures. For each categorical variable, we create dummy variables.
This leads to $99$ baseline variables. We first fit a Cox proportional
hazards model for $\lambda_{V}(t\mid\overline{H}_{t})$ to the data
including the baseline variables with a $l_{1}$ penalty. In fitting
the model, we select the tuning parameter using $10$-fold cross-validation.
The final proportional hazards model includes the selected baseline
terms and all time-dependent covariates $L_{t}$, including indicators
of bleeding, haemorrhagic stroke, and left atrial appendage procedures
associated with permanent discontinuation and outcomes. To build a
model for $\lambda_{C}(t\mid\overline{H}_{t})$ in (\ref{eq:censoring}),
we consider the same procedure for $\lambda_{V}(t\mid\overline{H}_{t})$.
This is because the decision to re-start treatment was left to the
patient and physician, and the resulting censoring may depend on the
patient's characteristics and evolving disease status. To estimate
$E\left\{ U(\psi)\mid\overline{H}_{0}\right\} $, we regress $\widehat{K}_{C}\left(T\mid\overline{H}_{T}\right)^{-1}\Delta U(\psi)$
on $X$ with a $l_{1}$ penalty.
\end{document}